\newtheorem{theorem}{Theorem}[section]
\newtheorem{lemma}[theorem]{Lemma}
\newtheorem{claim}[theorem]{Claim}
\newtheorem{fact}[theorem]{Fact}
\newtheorem{cor}[theorem]{Corollary}
\newcommand{\zone}{\bra{0, 1}}
\newcommand{\adeg}{\widetilde{\mathrm{deg}}}
\newcommand{\eps}{\epsilon}
\newcommand\R{\mathbb R}
\newcommand\dout{D^\text{out}}
\newcommand\din{D^\text{in}}
\providecommand\bigabs[1]{\bigl\lvert#1\bigr\rvert}
\newcommand{\eat}[1]{}
\newcommand{\bra}[1]{\{#1\}}
\newcommand{\mathify}[1]{\ifmmode{#1}\else\mbox{$#1$}\fi}
\newcommand{\abs}[1]{\mathify{\left| #1 \right|}}
\newcommand{\OR}{\mathsf{OR}}
\newcommand{\PARITY}{\mathsf{XOR}}
\newtheorem*{corollary*}{Corollary}
\theoremstyle{definition}
\DeclareMathOperator\E{\mathbb{E}}
\title{Sharp indistinguishability bounds from non-uniform approximations}
\author{Christopher Williamson%
  \thanks{\texttt{chris@cse.cuhk.edu.hk}}}
\begin{document}
\maketitle
\abstract
We study the problem of distinguishing between two symmetric probability distributions over $n$ bits by observing $k$ bits of a sample, subject to the constraint that all $k-1$-wise marginal distributions of the two distributions are identical to each other. Previous works of Bogdanov et al.~\cite{bmtw19} and of Huang and Viola~\cite{hv} have established approximately tight results on the maximal statistical distance when $k$ is at most a small constant fraction of $n$ and Naor and Shamir~\cite{NaorS94} gave a tight bound for all $k$ in the case of distinguishing with the $\OR$ function. In this work we provide sharp upper and lower bounds on the maximal statistical distance that holds for all $k$. Upper bounds on the statistical distance have typically been obtained by providing \textit{uniform} low-degree polynomial approximations to certain higher-degree polynomials; the sharpness and wider applicability of our result stems from the construction of suitable non-uniform approximations.

\section{Introduction}
We consider pairs of distributions $\mu$ and $\nu$ over $\{0,1\}^n$. The distributions $\mu$ and $\nu$ are said to be $j$-wise indistinguishable if for any subset $S\subseteq [n]$ of size at most $j$, the the marginal distributions $\mu_S$ and $\nu_S$ over indices in $S$ are identically distributed. The distributions are $k$-wise reconstructible with advantage $\epsilon$ if there exists a set $S\subseteq [n]$ of indices of size $k$ and a statistical test $T:\{0,1\}^{|S|}\rightarrow \{0,1\}$ such that
\[
|\mathbb{E}_{X\sim\mu}[T(X|_S)]-\mathbb{E}_{Y\sim\nu}[T(Y|_S)]|\geq\epsilon,
\]
where $X|_S$ is the restriction of $X$ to the bits located at the indices in $S$. The distributions are symmetric if $\mu$ and $\nu$ are invariant under permutation (see definitions in Section~\ref{s:prelim}); for such distributions the size of $S$ is relevant for distinguishing but not the choice of indices.

\paragraph{Cryptographic motivation.} Work of Bogdanov et al.~\cite{BIVW} considered this notion of indistinguishability as a way to capture cryptographic secret sharing schemes in a minimal setting. Their observation was that a single bit secret can be shared by sampling $n$ bits from $X$ or from $Y$, depending on the secret: the $j$-wise indistinguishability of the distributions provides a security guarantee that any size $\leq j$ coalition of colluding parties learn nothing about the secret from their joint shares. The secret reconstruction function for the scheme is a test $T$ applied over the shares of sufficiently many parties. A key question in their work was how large $j$ could be taken so that there exists a $T$ that is both computable with $AC^0$ circuits and has reconstruction advantage $\epsilon =\Omega (1)$ against some pair of $j$-wise indistinguishable distributions. In this (and other works~\cite{BW17},~\cite{hv}), we consider the statistical distance between the distributions, which includes the study of tests $T$ that are not in $AC^0$ and reconstruction advantage $\epsilon$ that may be vanishing.

\paragraph{Approximate degree motivation.} The work~\cite{BIVW} largely proceeded by a connection to the theory of approximate degree of Boolean functions. The $\epsilon$-approximate degree of a Boolean function $f\colon \{0, 1\}^n \to \R$, denoted $\adeg_{\epsilon}(f)$, is the least degree of a multivariate real-valued polynomial $p$ such that $|p(x)-f(x)| \leq \epsilon$ for all inputs $x \in \zone^n$. This quantity has received significant attention, owing to its polynomial equivalence to many other complexity measures including sensitivity, exact degree, deterministic and randomized query complexity~\cite{NiS94}, and quantum query complexity~\cite{BCdWZ99}. By linear programming duality, $f$ has $\eps$-approximate degree more than $j$ if and only if there exists a pair of probability distributions $\mu$ and $\nu$ over $\zone^n$ such that $\mu$ and $\nu$ are $j$-wise indistinguishable and $2\eps$-distinguishable by $f$. The approximate degree of all symmetric Boolean functions was resolved in the constant-error regime $\epsilon = \Theta (1)$ by Paturi~\cite{paturi1992degree} and in the general error regime by Bun and Thaler~\cite{bun2015dual}. This implies tight upper and lower bounds on the ability of any given symmetric Boolean function to reconstruct from indistinguishable distributions when given access to a full sample of $n$ bits.

\paragraph{Prior work.} Recent works of Bogdanov et al.~\cite{bmtw19} and of Huang and Viola~\cite{hv} extended the study of the indistinguishability of symmetric distributions to the setting of distinguishing with a subset of  indices. They considered the extent to which symmetric $j$-wise indistinguishable distributions must have statistically close $k$-wise marginals for $k>j$. In particular,~\cite{bmtw19} shows that if $\mu$ and $\nu$ are symmetric over $n$-bit strings and perfectly $j$-wise indistinguishable, then the statistical distance between $k$-wise marginals is at most $O(j^{3/2})\cdot e^{-j^2/1156k}$ for all $j < k \leq n/64$. The analogous result in~\cite{hv} gives a similar bound and also applies to $k$ at most some constant fraction of $n$. A matching lower bound given in~\cite{bmtw19} shows that there exists a pair of distributions that are $j$-wise indistinguishable but reconstructable with the $\OR_k$ function with advantage at least $k^{-1/2}\cdot e^{-O(-j^2/k)}$. This lower bound extends to all $j<k\leq n$. Extension of the upper bound to $k$ close to $n$ is of interest due to the fact that the behaviour must change in this parameter range; in particular, the $\PARITY_n$ function can distinguish perfectly between a pair of distributions that are $n-1$ wise indistinguishable. Also, the lower bound for $\OR$ in~\cite{bmtw19} extends for $k$ up to $n$ but is unlikely to come close to matching any upper bound on statistical distance.  

\paragraph{Our contribution.} In the present work, we extend the results of~\cite{bmtw19} and~\cite{hv} to the setting of parameters where $k$ may range freely from 0 to $n-1$ (the case $k=n$ is trivial in light of the $\PARITY$ example above). Our results are ``sharp" in two senses; first, we consider only the sharp threshold reconstruction setting $j=k-1$ and second, our results are tight up to polynomial factors.

\begin{theorem}
\label{thm:mainupper}
There exists an absolute constant $c$ such that for any pair of symmetric $k-1$ wise indistinguishable distributions $\mu,\nu$ over $\{0,1\}^n$, the statistical distance between $\mu|_k$ and $\nu|_k$ is at most:
\[
O(n^c)\cdot \frac{(n-k)^{\frac{n-k}{2}}\cdot (n+k)^{\frac{n+k}{2}}}{2^k\cdot n^n} 
\]
\end{theorem}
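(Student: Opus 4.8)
\emph{Step 1: reduction to a discrete Chebyshev problem.} Since $\mu,\nu$ are symmetric they are determined by the Hamming-weight distributions $\mu_i=\Pr[|X|=i]$, $\nu_i=\Pr[|Y|=i]$; write $\delta_i=\mu_i-\nu_i$. The first thing I would record is that for symmetric distributions $(k-1)$-wise indistinguishability is equivalent to $\sum_{i=0}^{n}\delta_i\,i^{t}=0$ for $t=0,\dots,k-1$ (because the hypergeometric weights $\binom ij\binom{n-i}{k-1-j}$, $j<k$, span the polynomials in $i$ of degree $<k$). Next, averaging any $k$-bit test over the permutations of $[n]$ fixing the index set $S$ changes neither $\E_\mu$ nor $\E_\nu$, so the optimal test is a function $g\colon\{0,\dots,k\}\to[0,1]$ of the Hamming weight of $X|_S$; its reconstruction polynomial $G_g(i)=\E[\,g(\text{weight of }X|_S)\mid |X|=i\,]=\binom nk^{-1}\sum_j g(j)\binom ij\binom{n-i}{k-j}$ has degree $\le k$ in $i$, and a short coefficient computation shows that its coefficient on the basis polynomial $\binom xk$ equals $\binom nk^{-1}\sum_j(-1)^{k-j}\binom kj g(j)=\binom nk^{-1}(\Delta^{k}g)(0)$. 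Since $\delta$ annihilates all polynomials of degree $<k$, only this top coefficient contributes, so $\E_\mu[g(X|_S)]-\E_\nu[g(Y|_S)]=\binom nk^{-1}(\Delta^{k}g)(0)\sum_i\delta_i\binom ik$, and because $\max_{g\in[0,1]^{k+1}}|(\Delta^{k}g)(0)|=2^{k-1}$ (attained by the alternating $0/1$ test) we obtain the exact identity $d_{TV}(\mu|_k,\nu|_k)=2^{k-1}\binom nk^{-1}\bigl|\sum_{i}\delta_i\binom ik\bigr|$. Subtracting an arbitrary polynomial $q$ of degree $<k$ from $\binom xk$ and applying H\"older's inequality together with $\|\delta\|_1\le 2$ yields
\[
d_{TV}(\mu|_k,\nu|_k)\ \le\ \frac{2^{k}}{\binom nk}\cdot\min_{q:\,\deg q<k}\ \max_{0\le i\le n}\Bigl|\binom ik-q(i)\Bigr|\ =\ \frac{2^{k}(n-k)!}{n!}\cdot M_k ,
\]
where $M_k$ denotes the least sup-norm on $\{0,\dots,n\}$ of a monic degree-$k$ polynomial (using that $\binom xk-q(x)$ runs exactly over $\tfrac1{k!}\cdot(\text{monic degree }k)$).

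\emph{Step 2: the extremal polynomial.} Unwinding Stirling's approximation, the target bound is exactly the statement $M_k\le O(n^{c})\cdot\frac{n!}{(n-k)!}\cdot\frac{(n-k)^{(n-k)/2}(n+k)^{(n+k)/2}}{4^{k}n^{n}}$, which simplifies (with $a=\tfrac{n+k}2,\ b=\tfrac{n-k}2$) to $M_k\le O(n^{c})\cdot\frac{a^{a}}{b^{b}(2e)^{k}}$. Both obvious candidates fall short: $k$ consecutive integer roots at the center of the grid give only $M_k\lesssim a!/b!$, too large by a factor $\approx 2^{k}$, while the Chebyshev polynomial rescaled to the interval $[0,n]$ gives $M_k\le 2(n/4)^{k}$, too large by a factor $e^{\Theta(k^{3}/n^{2})}$ when $k=\Theta(n)$. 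One must therefore use a genuinely discrete construction that exploits the coarseness of the grid relative to the degree. I would produce a near-optimal monic $p$ explicitly, e.g. from a contour representation $p(x)=\frac{k!}{2\pi i}\oint (1+z)^{x}\,h(z)\,z^{-k-1}\,dz$ for a tuned polynomial $h$ normalized by $h(0)=1$ (a natural first guess being $h(z)=(1-z^{2})^{j}$ with $j\approx (n-k)^{2}/(4k)$, to be refined), and then estimate $\max_{0\le i\le n}|p(i)|$ by deforming the contour to a circle $|z|=\rho$ and running the saddle-point method, optimizing jointly over $\rho$ and the parameters of $h$. Such a $p$ is precisely the ``non-uniform approximation'' of the abstract: the induced approximant $q=\binom\cdot k-k!^{-1}p$ has error $|k!^{-1}p(i)|$ that is \emph{zero} at $k$ of the $n+1$ grid points (the roots of $p$) and merely bounded, not small, at the rest, whereas the approximations used in~\cite{bmtw19} and~\cite{hv} have uniformly small error.

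\emph{Main obstacle.} Step 1 is lossless, and H\"older with $\|\delta\|_1\le 2$ is attained (up to the choice of sign) by the extremal pair of distributions dual to the best polynomial approximation, so the resulting bound is sharp up to the $\mathrm{poly}(n)$ factor; the real difficulty is all in Step 2. The challenge is to choose the polynomial --- equivalently its root multiset, equivalently the function $h$ --- so that the saddle-point estimate reproduces, to within a $\mathrm{poly}(n)$ factor, \emph{all three} delicate factors $4^{-k}$, $2^{n}$ and $(n-k)^{(n-k)/2}(n+k)^{(n+k)/2}/n^{n}$ appearing in the target. I expect this to require treating separately the ``fine grid'' regime $k=O(\sqrt n)$, where the continuum Chebyshev polynomial is already within $\mathrm{poly}(n)$ of optimal, and the ``coarse grid'' regime $k=\omega(\sqrt n)$, where discreteness is essential and the saddle point genuinely moves; and it will require showing that the contour bound --- which is somewhat lossy since it discards cancellation in the integral --- is nevertheless tight to $\mathrm{poly}(n)$ for the chosen $h$, or else replacing it by a sharper analysis of the resulting oscillatory integral.
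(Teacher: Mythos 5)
Your Step 1 is correct, and in fact tighter than what the paper does: the exact identity expressing the advantage through $(\Delta^k g)(0)$ and $\sum_i\delta_i\binom{i}{k}$ replaces the paper's decomposition of a symmetric test into the weight-indicators $Q_w$ (which costs a factor $k+1$, harmless here), and it reduces the theorem, losslessly, to exactly the same extremal problem the paper solves: bound $M_k$, the least sup-norm on the grid $\{0,\dots,n\}$ of a monic degree-$k$ polynomial, by (up to $\mathrm{poly}(n)$) $4^{-k}\sqrt{(n+k)!/(n-k)!}$. Your arithmetic checking that this is the right target, and that both the falling-factorial polynomial and the rescaled continuum Chebyshev polynomial miss it, is also correct.

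The genuine gap is that Step 2 --- the only step carrying the content of the theorem --- is not proved. You propose a contour representation with an unspecified, ``to be refined'' weight $h$, a saddle-point analysis you have not performed, and you state yourself that making all three factors come out right is an open difficulty, possibly requiring a case split in $k$. As written this is a plan, not a proof, and there is no evidence the guess $h(z)=(1-z^2)^j$ lands within $\mathrm{poly}(n)$ of the target. The missing idea, which is how the paper closes exactly this gap, is to expand the monomial in the discrete Chebyshev (Gram) basis associated with the grid: writing $x^k=\sum_{d\le k}C_d\phi_d$ over $\din_{n+1}$ and taking the truncation $\sum_{d<k}C_d\phi_d$ as the approximant, the error is $|C_k|\cdot\max|\phi_k|$, where orthonormality forces $\max_{\din}|\phi_k|\le\sqrt{n}$, and the three-term recurrence gives $C_k=\prod_{i\le k}(2\alpha_i)^{-1}\approx(2n)^{-k}\sqrt{(n+k)!/(n-k)!}$ (Lemma~\ref{lem:monomialgram}, Corollary~\ref{cor:maineasy}); after the affine change of variable back to $\{0,\dots,n\}$ this is precisely your required bound on $M_k$, with no saddle-point estimate and no separation into fine- and coarse-grid regimes. (The same leading Gram coefficient, via orthogonality, also yields the matching lower bound, which is why the answer is sharp.) Until you either carry out your contour analysis in full or substitute an argument of this kind, the proposal does not establish the theorem.
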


\begin{theorem}
\label{thm:mainlower}
There exists a statistical test $T:\{0,1\}^k\rightarrow \{0,1\}$ and a pair of symmetric $k-1$-wise indistinguishable distributions $\mu,\nu$ such that the reconstruction advantage of $T$ is at least
\[
\frac{(n-k)^{\frac{n-k}{2}}\cdot (n+k)^{\frac{n+k}{2}}}{2^k\cdot n^n}.
\] 
\end{theorem}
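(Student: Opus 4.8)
\emph{Step 1 (reduction to a univariate moment problem).}
A symmetric distribution on $\{0,1\}^n$ is determined by its Hamming‑weight distribution, a vector $a\in\R^{n+1}$; for $\mu\leftrightarrow a$, $\nu\leftrightarrow b$ I would set $\delta=a-b$. The weight‑$j$ probability of the size‑$s$ marginal of $\mu$ is $\sum_i a_i\binom{i}{j}\binom{n-i}{s-j}/\binom{n}{s}$, and since $\{\binom{\cdot}{j}\binom{n-\cdot}{s-j}:0\le j\le s\}$ spans the univariate polynomials of degree $\le s$, the distributions are $(k-1)$‑wise indistinguishable exactly when $\sum_i\delta_i\,i^{t}=0$ for $t=0,\dots,k-1$; moreover $\delta$ comes from an honest pair of distributions precisely when $\sum_i\delta_i=0$ (already forced) and $\|\delta\|_1\le 2$. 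The difference of the weight‑$k$ marginals, $q_j=\sum_i\delta_i\binom{i}{j}\binom{n-i}{k-j}/\binom{n}{k}$, then collapses: $\binom{i}{j}\binom{n-i}{k-j}$ is a degree‑$k$ polynomial in $i$ with leading coefficient $(-1)^{k-j}/(j!\,(k-j)!)$, so orthogonality of $\delta$ to lower degrees gives $q_j=\frac{(-1)^{k-j}}{j!\,(k-j)!\,\binom{n}{k}}\langle\delta,x^k\rangle$. Hence all the $q_j$ carry the single sign pattern $(-1)^{k-j}$, an optimal symmetric test is $\PARITY_k$ (or its negation), and its reconstruction advantage equals $\tfrac12\sum_j|q_j|=\frac{2^{k-1}(n-k)!}{n!}\,\bigl|\langle\delta,x^k\rangle\bigr|$.

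\emph{Step 2 (dualization).}
By Step 1 it suffices to exhibit a valid $\delta$ (support in $\{0,\dots,n\}$, vanishing moments $0,\dots,k-1$, $\|\delta\|_1\le 2$) with
\[
\bigl|\langle\delta,x^k\rangle\bigr|\ \ge\ \frac{2\,n!}{2^{2k}(n-k)!}\cdot\frac{(n-k)^{(n-k)/2}(n+k)^{(n+k)/2}}{n^{n}}.
\]
By linear‑programming duality the largest attainable $|\langle\delta,x^k\rangle|$ is $2M_k$, where $M_k=\min_{\deg p<k}\max_{0\le i\le n}|i^k-p(i)|$ is the least sup‑norm over the grid $\{0,\dots,n\}$ of a \emph{monic} degree‑$k$ polynomial; so the theorem is equivalent to lower bounding this discrete‑Chebyshev quantity. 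Concretely, for any $k+1$ distinct integers $x_0<\dots<x_k$ in $\{0,\dots,n\}$ the divided‑difference functional furnishes a valid $\delta$ with $\delta_j\propto\prod_{i\ne j}(x_j-x_i)^{-1}$, $\langle\delta,x^k\rangle\propto1$, $\|\delta\|_1\propto S:=\sum_j\prod_{i\ne j}|x_j-x_i|^{-1}$, so it is enough to find integer nodes with $S$ at most the reciprocal of the right‑hand side above; equivalently, one can describe the same $\delta$ as $\pm\Delta^{k}\psi$ for a sequence $\psi$ on $\{0,\dots,n-k\}$, with $\langle\delta,x^k\rangle=\pm k!\sum\psi$ and $\|\delta\|_1=\|\Delta^k\psi\|_1$, and must make $\|\Delta^k\psi\|_1/|\sum\psi|$ small.

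\emph{Step 3 (the construction).}
I would take $x_j$ to be the integer closest to the $j$‑th Chebyshev extremum $\tfrac n2\bigl(1-\cos\tfrac{j\pi}{k}\bigr)$ of $[0,n]$, resolving the collisions that occur near the endpoints by pushing to the nearest free integers --- i.e.\ the node set is ``consecutive integers near $0$ and near $n$, Chebyshev‑spaced in the bulk.'' For the unrounded extrema one has the exact identity $S=2^{2k-1}/n^{k}$ (the leading coefficient of $T_k$), which already matches the target while the extrema are more than unit‑spaced, i.e.\ for $k=O(\sqrt n)$; for larger $k$ the forced consecutive blocks at the ends (and, when $n-k$ is small, the near‑binomial profile of $\psi$ in the $\Delta^k\psi$ description) become the dominant contribution, and one estimates their weight by Stirling's formula to recover the factor $(n-k)^{(n-k)/2}(n+k)^{(n+k)/2}/(2^k n^n)$.

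\emph{Main obstacle.}
The hard part is the whole range $\sqrt n\lesssim k\le n-1$, where the grid $\{0,\dots,n\}$ --- not the interval $[0,n]$ --- is the binding constraint: the Chebyshev extrema near the endpoints cannot all be realized by distinct integers, so the clean continuous computation is unavailable and one must analyze a genuinely discrete node configuration. Since Theorem~\ref{thm:mainlower} carries no polynomial slack (in contrast to Theorem~\ref{thm:mainupper}), the divided‑difference sum $S$ --- equivalently the finite‑difference $\ell_1$‑norm of $\psi$ --- has to be pinned down up to the exact stated constant, which I expect to demand an honest, non‑asymptotic estimate of the relevant products of distances rather than a soft bound; a secondary nuisance is the parity of $n\pm k$ in the ``centered'' constructions, which I would absorb by unit shifts of the node set.
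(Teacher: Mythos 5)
Your Steps 1--2 are a sound and correct reduction, and they are essentially the paper's own setup in different clothing: your moment-vector formulation plus LP duality is the symmetrization/duality machinery of Section~\ref{s:symmdual} (Claim~\ref{c:dualityclaim}, Fact~\ref{fact:pw}), and the quantity you arrive at --- the minimax error of a monic degree-$k$ polynomial over the integer grid, equivalently over $\dout_n$ --- is exactly the quantity the paper lower-bounds (the paper distinguishes with $Q_{k/2}$ rather than $\PARITY_k$, which only changes the bookkeeping by the per-bucket versus full statistical distance accounting). The problem is Step 3: the entire quantitative content of the theorem is the lower bound on this discrete Chebyshev quantity in the regime $k \gg \sqrt{n}$, and you do not prove it. You propose rounded Chebyshev extrema with consecutive-integer blocks at the endpoints and assert that a Stirling analysis of the divided-difference sum $S$ ``recovers'' the factor $(n-k)^{(n-k)/2}(n+k)^{(n+k)/2}/(2^k n^n)$, but no such estimate is carried out, and you yourself flag it as the main obstacle. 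As written, the proposal proves the theorem only where the continuous identity $S = 2^{2k-1}/n^k$ suffices (roughly $k = O(\sqrt n)$), i.e., precisely not in the range this paper is about. (One side remark: your worry that the bound carries no polynomial slack is stricter than necessary --- the paper's own proof of Theorem~\ref{thm:mainlower} only establishes the bound up to $\mathrm{poly}(n)$ factors, via $\gtrsim$ --- but even with that slack your key estimate is absent.)

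The paper closes exactly this gap without any explicit node construction, and this is where the two approaches genuinely diverge. Instead of exhibiting a near-optimal dual configuration and estimating products of node distances, the paper lower-bounds the sup-norm error over $\din_n$ by the $L^2(\din_n)$ error and reads off the answer from orthogonality: any degree-$(k-1)$ approximation to a degree-$k$ polynomial must err by at least the absolute value of its leading coefficient in the Gram basis (Lemma~\ref{lem:gramorth}), that coefficient is computed from the explicit three-term recurrence for the Gram polynomials (Lemma~\ref{lem:monomialgram}, combined with $C_{k/2}$ from Claim~\ref{claim:leadpwterm} and Lemma~\ref{lem:argmax}), and Lemma~\ref{lem:doutdin} transfers the hardness from $\din_n$ to $\dout_n$. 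If you want to salvage your route, the cleanest fix is to replace your Step 3 by this $L^2$/Gram-coefficient argument applied to the monic monomial on the grid; pursuing the rounded-extrema construction instead would require an honest, currently missing analysis of the endpoint blocks, which is a substantial piece of work rather than a detail.
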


\paragraph{Techniques and roadmap.}
The established technique to develop indistinguishability upper bounds is to decompose an arbitrary statistical test for symmetric distributions into a small basis using the fact that without loss of generality, the best test is a symmetric function. The basis we work over is $Q_w$ for $w=0,...,k$ where $Q_w$ is a Boolean function that observes $k$ bits and accepts if and only if the observed Hamming weight is exactly $w$. Providing a low-degree polynomial approximation to $Q_w$ rules out the existence of distributions that can be reconstructed with $Q_w$. In practice, a symmetrization is applied to Boolean function $Q_w$ to reduce the problem of its approximation to a problem of approximating a univariate polynomial with a lower degree polynomial. Due to the discrete domain of $Q_w$ (the Boolean cube), the univariate approximations need not be uniform, but are instead over a set of separated points on the real line (for example, $-1,-1+2/n,...,1-2/n,1$).   \\\\
Prior works have not obtained statistical distance upper bounds for $k$ close to $n$ because the approach taken to approximating (the symmetrized version of) $Q_w$ has been to use Chebyshev polynomials to provide uniform approximations to $Q_w$ instead of discrete approximations. This strategy breaks down for large $k$ because the difficulty of uniform approximations diverges from the difficulty of the approximation over the relevant discrete set. This observation motivates the use of discrete Chebyshev polynomials (also known as Gram polynomials) to construct approximations that yield upper bounds on the maximum statistical distance. (We also note that Naor and Shamir~\cite{NaorS94} gave an upper bound for distinguishing $k-1$ wise indistinguishable distributions with the $\OR_k$ function, for all $k$ up to $n$. However, they used an unrelated approximate inclusion-exclusion technique that is tailored to $T=\OR_k$.)\\\\
We provide a lower bound on statistical distance based on hardness of approximation with discrete Chebyshev polynomials. This follows from their orthogonality and from linear programming duality. We believe that prior techniques via orthogonality of (non-discrete) Chebyshev polynomials could be used to show this result (indeed the lower bound result from~\cite{bmtw19} applies to all $k$ up to $n$ and the technique they use should be extendable to distinguishers more powerful than $\OR$).
\\\\After a section on preliminaries, we prove a lemma about the expression of monomials in the discrete Chebyshev basis in Section~\ref{s:monomials}. In Section~\ref{s:moneasy} we construct discrete approximations to the monomial and prove a complementary hardness of approximation result in Section~\ref{s:monhard}. In Section~\ref{s:symmdual} we justify the precise approximation problem to solve using symmetrization and LP duality techniques. Sections~\ref{s:ub} and~\ref{s:lb} justify Theorem~\ref{thm:mainupper} and Theorem~\ref{thm:mainlower}, respectively.

\section{Preliminaries}
\label{s:prelim}
We will be working with polynomial approximations over different discrete sets of points. We define $\dout_n$ as the set of points $\{-1,-1+2/n,...,1\}$ and $\din_n$ as $\{-1+1/n,-1+3/n,...,1-1/n\}$. It is easy to check that $|\dout_n|=n+1$, that $|\din_n|=n$. Further, we have the basic relationships:

\begin{equation}
\label{eq:stretch}
\dout_{n}=\left\{\frac{n+1}{n}\cdot x:x\in\din_{n+1}\right\}
\end{equation}

\begin{equation}
\label{eq:shift}
\din_{n}\subset\left\{x-\frac1n :x\in\dout_n\right\}
\end{equation}

For simplicity we will use $\lesssim$ to hide factors polynomial in $n$.

\paragraph{Symmetric distributions and functions.}
Let $f:\{0,1\}^n\rightarrow\mathbb{R}$ be a function. We say that $f$ is symmetric if the output of $f$ depends only on the Hamming weight of its input. A probability distribution $\mu: \{0,1\}^n\rightarrow [0,1]$ is symmetric if the corresponding probability mass function mapping inputs to probabilities is a symmetric function. 
We also will need two further facts about distinguishing symmetric distributions. Proofs of these appear in~\cite{bmtw19}.
\begin{fact}
\label{fact:symmetricmarginals}
Suppose that $\mu$ is a symmetric distribution over $\{0,1\}^n$. For $S\subseteq \{0,...,n\}$, let $\mu|_{S}$ denote the projection of $\mu$ to the indices in $S$. Then, $\mu|_S$ is also symmetric.
\end{fact}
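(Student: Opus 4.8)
The plan is to verify directly that the probability mass function of $\mu|_S$ depends only on the Hamming weight of its argument, which is exactly what it means for $\mu|_S$ to be symmetric. Write $k=|S|$ and identify strings in $\{0,1\}^S$ with strings in $\{0,1\}^k$. For $y\in\{0,1\}^S$, the marginal mass is
\[
\mu|_S(y)=\sum_{x\in\{0,1\}^n\,:\,x|_S=y}\mu(x),
\]
so it suffices to show $\mu|_S(y)=\mu|_S(y')$ whenever $y,y'\in\{0,1\}^S$ have the same number of ones.

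The first step, given such $y$ and $y'$, is to realize the passage from $y$ to $y'$ by a coordinate permutation that leaves the complement of $S$ untouched. Since $y$ and $y'$ have equally many ones — and hence equally many zeros — on $S$, one can choose a bijection $\sigma\colon S\to S$ with $y'_{\sigma(i)}=y_i$ for all $i\in S$ (pick any bijection from $\{i:y_i=1\}$ onto $\{i:y'_i=1\}$ and any bijection from $\{i:y_i=0\}$ onto $\{i:y'_i=0\}$). Extend $\sigma$ to a permutation $\pi$ of $[n]$ by declaring $\pi$ to fix every coordinate outside $S$; then $\pi$ maps $S$ bijectively onto itself. For $x\in\{0,1\}^n$ write $\pi x$ for the string with $(\pi x)_i=x_{\pi^{-1}(i)}$.

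The second step is a change of variables in the defining sum. By symmetry of $\mu$ we have $\mu(x)=\mu(\pi^{-1}x)$ for every $x$, and the map $z\mapsto \pi z$ is a bijection of $\{0,1\}^n$, so
\[
\mu|_S(y')=\sum_{x\,:\,x|_S=y'}\mu(\pi^{-1}x)=\sum_{z\,:\,(\pi z)|_S=y'}\mu(z).
\]
Since $\pi^{-1}$ restricts to $\sigma^{-1}$ on $S$, the condition $(\pi z)|_S=y'$ says $z_{\sigma^{-1}(i)}=y'_i$ for all $i\in S$, i.e.\ $z_j=y'_{\sigma(j)}=y_j$ for all $j\in S$, which is just $z|_S=y$. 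Therefore the last sum equals $\sum_{z\,:\,z|_S=y}\mu(z)=\mu|_S(y)$, as desired, and $\mu|_S$ is symmetric.

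This is entirely a bookkeeping argument about the action of coordinate permutations on strings and on marginal sums; there is no substantive obstacle. The one place to be slightly careful is ensuring that the permutation used can be taken to fix $[n]\setminus S$ pointwise, so that it actually induces a symmetry of the $k$-bit marginal and not merely of the full distribution on $n$ bits.
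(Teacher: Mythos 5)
Your proof is correct: the change of variables under a permutation that maps $S$ onto itself and fixes $[n]\setminus S$ pointwise is exactly the standard argument, and each step (the choice of $\sigma$, the identity $\mu(x)=\mu(\pi^{-1}x)$, and the translation of the condition $(\pi z)|_S=y'$ into $z|_S=y$) checks out. Note the paper itself does not prove this fact but defers it to~\cite{bmtw19}; your direct verification is the same routine symmetrization bookkeeping one would expect there, so no further comparison is needed.
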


\begin{fact}
\label{fact:symmetrictest}
Suppose that $\mu$ and $\nu$ are symmetric distributions over $\{0,1\}^n$. Then without loss of generality, the best statistical test $Q:\{0,1\}^n\rightarrow [0,1]$ for distinguishing between $\mu$ and $\nu$ is a symmetric function. In particular, we have:
\[\max_{\text{symmetric }Q}\{\E_\mu[Q(\mu)]-\E_\nu[Q(\nu)]\}=\max_{Q}\{\E_\mu[Q(\mu)]-\E_\nu[Q(\nu)]\}.\]
\end{fact}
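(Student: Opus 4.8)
The plan is to run a standard symmetrization (averaging) argument over the symmetric group $S_n$ acting on the coordinates of $\{0,1\}^n$. First I would record the easy direction: the set of tests $Q\colon\{0,1\}^n\to[0,1]$ is the compact convex set $[0,1]^{2^n}$, and $Q\mapsto\E_\mu[Q]-\E_\nu[Q]$ is linear, hence continuous, so a maximizer $Q^\ast$ exists; since every symmetric test is in particular a test, $\max_{\text{symmetric }Q}\le\max_Q$ trivially. Thus it suffices to produce a \emph{symmetric} $[0,1]$-valued test whose advantage equals $\E_\mu[Q^\ast]-\E_\nu[Q^\ast]$.

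For $\pi\in S_n$ let $Q^\pi(x):=Q^\ast(x_{\pi(1)},\dots,x_{\pi(n)})$. Because $\mu$ is symmetric its probability mass function is invariant under permuting coordinates, so the change of variables $X\mapsto\pi X$ gives $\E_{X\sim\mu}[Q^\pi(X)]=\E_{X\sim\mu}[Q^\ast(X)]$, and likewise for $\nu$; hence each $Q^\pi$ is a $[0,1]$-valued test with exactly the same reconstruction advantage as $Q^\ast$. Now set $\bar Q:=\frac{1}{n!}\sum_{\pi\in S_n}Q^\pi$. As a convex combination of $[0,1]$-valued functions, $\bar Q$ again maps into $[0,1]$; it is symmetric since precomposing with any fixed $\sigma\in S_n$ merely reindexes the sum ($\bar Q^\sigma=\frac{1}{n!}\sum_\pi Q^{\pi\sigma}=\bar Q$); and by linearity of expectation $\E_\mu[\bar Q]-\E_\nu[\bar Q]=\frac{1}{n!}\sum_\pi\bigl(\E_\mu[Q^\pi]-\E_\nu[Q^\pi]\bigr)=\E_\mu[Q^\ast]-\E_\nu[Q^\ast]$. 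This yields $\max_{\text{symmetric }Q}\{\E_\mu[Q]-\E_\nu[Q]\}\ge\max_Q\{\E_\mu[Q]-\E_\nu[Q]\}$, and together with the reverse inequality the two maxima coincide.

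There is no genuine obstacle here; the only points needing minor care are that averaging keeps us inside the range $[0,1]$ — which is precisely why one restricts attention to tests that output probabilities rather than arbitrary reals — and that the advantage is invariant under the coordinate action, which is exactly where the hypothesis that $\mu$ and $\nu$ are symmetric enters. The same averaging idea, applied to the marginal on a subset $S$, also gives Fact~\ref{fact:symmetricmarginals}: a marginal of a permutation-invariant distribution is invariant under permutations of $S$, since summing out the coordinates outside $S$ commutes with permuting those inside $S$.
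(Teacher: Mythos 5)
Your argument is correct: the averaging over $S_n$ of an optimal test, using permutation-invariance of $\mu$ and $\nu$ to preserve the advantage and convexity of $[0,1]$ to stay a valid test, is exactly the standard symmetrization proof of this fact (the paper itself does not reprove it but defers to~\cite{bmtw19}, where the same argument is used). The only cosmetic point is the order of composition in $\bar Q^\sigma=\frac{1}{n!}\sum_\pi Q^{\sigma\pi}$ versus $Q^{\pi\sigma}$, which is immaterial since either way $\pi\mapsto$ the composed permutation is a bijection of $S_n$.
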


\subsection{Discrete Chebyshev polynomials}
The discrete Chebyshev polynomials (sometimes called the Gram polynomials), for parameter $n$ are a family of real polynomials $\{\phi_d\}_{d=0,...,n-1}$. Borrowing notation from~\cite{barnard1998gram}, we have that the polynomials have the following properties:
\begin{itemize}
\item The family of polynomials $\{\phi_d\}_{d=0,...,n-1}$ are orthogonal with respect to the bilinear form given by

\begin{equation}
(\phi_i,\phi_j):=\frac1n\cdot\sum_{x\in\din_n}\phi_i(x)\cdot \phi_j(x)
\end{equation} 
\item For each $d$:

\begin{equation}
\label{eq:orthfact}
||\phi_d||:=(\phi_d,\phi_d)^{1/2}=1
\end{equation}
\item For each $d$: \begin{equation}\text{deg}(\phi_d)=d\end{equation}
\item The polynomials satisfy the recurrence:

\begin{equation}
\label{eq:6}
\phi_d(x)=2\alpha_{d-1}\cdot x\cdot \phi_{d-2}(x)-\frac{\alpha_{d-1}}{\alpha_{d-2}}\cdot \phi_{d-2}(x)
\end{equation}

\begin{equation}
\label{eq:7}
\alpha_{d-1}=\frac{n}{d}\cdot\left(\frac{d^2-1/4}{n^2-d^2}\right)^{1/2}
\end{equation}
where we have $\phi_0=1$, $\phi_{-1}=0$, and $\alpha_{-1}=1$.
\end{itemize}

Every degree-$k$ polynomial $p\colon \R \to \R$ has a unique expansion in the Gram basis:
\[ p(t) = \sum_{d = 0}^k c_d \phi_d(t),  \] 
where $c_{0}, \dots, c_K$ are the {\em Gram coefficients} of $p$.  

\subsection{Bounds on factorials and binomial coefficients}
We will make use of double factorials, which are given by:

\begin{equation}
\label{eq:doublefactorial}
n!!:=\prod_{i=0}^{\lfloor n/2\rfloor}n-2i
\end{equation}
and satisfy, when $n$ is even:

\begin{equation}
\label{eq:df}
n!!=2^{n/2}\cdot (n/2)!
\end{equation}
For $n$ odd, we simply observe that $(n-1)!! \lesssim n!!\lesssim (n+1)!!$ and apply the bound in (\ref{eq:df}).

We will bound factorials using

\begin{equation}
\label{eq:10}
n!=\Theta (\sqrt{n})\cdot \left(\frac{n}{e}\right)^n
\end{equation}
and the central binomial coefficient using

\begin{equation}
\label{eq:11}
\binom{2n}{n}=\Theta (1/\sqrt{n})\cdot 2^{2n}
\end{equation}

\subsection{Approximate degree of Boolean functions}
Let $f:\{0,1\}^n\rightarrow \{0,1\}$ be a Boolean function. We will use $\adeg_{\eps}(f)$ to denote the minimum degree of any real polynomial $p:\{0,1\}^n\rightarrow \mathbb{R}$ that approximates $f$ to within $\epsilon$ at every point in $\{0,1\}^n$.

\section{Monomials in the Gram basis}
\label{s:monomials}
\begin{lemma}
\label{lem:monomialgram}
Fix integer $k<n$. Let $C$ be the leading coefficient in the expansion of $x^k$ in the Gram polynomial basis with parameter $n$. Then, $C$ satisfies:
\[
\frac{2}{(2n)^k}\cdot\sqrt{\frac{(n+k)!}{(n-k)!}} \leq C \lesssim \frac{1}{(2n)^k}\cdot\sqrt{\frac{(n+k)!}{(n-k)!}} 
\]
\end{lemma}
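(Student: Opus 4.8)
The plan is to reduce everything to computing the leading coefficient of $\phi_k$. Write $x^k = \sum_{d=0}^{k} c_d\,\phi_d(x)$; since $\phi_d$ has degree $d$, only the $d=k$ term contributes to degree $k$, so comparing coefficients of $x^k$ gives $c_k \cdot \ell_k = 1$, where $\ell_k$ is the leading coefficient of $\phi_k$. Hence $C = c_k = 1/\ell_k$, and the whole lemma becomes a matter of estimating $\ell_k$ from the recurrence \eqref{eq:6}--\eqref{eq:7}.

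Next I would extract $\ell_k$ by telescoping. Taking leading coefficients on both sides of the recurrence, the $\phi_{d-2}$ term does not reach degree $d$, so $\ell_d = 2\alpha_{d-1}\,\ell_{d-1}$; with $\phi_0 = 1$ (so $\ell_0 = 1$) this gives
\[
\ell_k = 2^k \prod_{d=1}^{k} \alpha_{d-1} = 2^k \prod_{d=1}^{k} \frac{n}{d}\left(\frac{d^2 - 1/4}{n^2 - d^2}\right)^{1/2}.
\]
Then I would evaluate the three products separately: $\prod_{d=1}^k (n/d) = n^k/k!$; factoring $d^2 - 1/4 = (d-\tfrac12)(d+\tfrac12)$ gives $\prod_{d=1}^k (d^2-1/4) = (2k-1)!!\,(2k+1)!!/2^{2k}$; and $\prod_{d=1}^k (n^2-d^2) = \prod_{d=1}^k (n-d)(n+d) = \frac{(n-1)!}{(n-k-1)!}\cdot\frac{(n+k)!}{n!} = \frac{(n+k)!}{n\,(n-k-1)!}$. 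Substituting back yields the closed form
\[
C = \frac{1}{\ell_k} = \frac{k!}{n^k}\left(\frac{(n+k)!}{n\,(n-k-1)!\,(2k-1)!!\,(2k+1)!!}\right)^{1/2}.
\]

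Finally I would massage this exact expression into the stated form. For the double factorials I would use the identity $(2k-1)!!\,(2k+1)!! = \dfrac{(2k)!\,(2k+1)!}{2^{2k}(k!)^2}$ (equivalently the odd-case bounds around \eqref{eq:df}), so that $\dfrac{2^k\,(k!)}{\sqrt{(2k-1)!!(2k+1)!!}} = \dfrac{2^{2k}(k!)^2}{\sqrt{(2k)!(2k+1)!}} = \dfrac{2^{2k}}{\binom{2k}{k}\sqrt{2k+1}}$, which is $\Theta(1)$ by \eqref{eq:11}; this is exactly what converts $\dfrac{k!}{n^k}\cdot\dfrac{2^k k!}{\sqrt{(2k)!(2k+1)!}}$ into the factor $\dfrac{1}{(2n)^k}$ up to a constant. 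For the remaining factorials I would use $(n-k-1)! = (n-k)!/(n-k)$ to rewrite $\dfrac{(n+k)!}{n\,(n-k-1)!} = \dfrac{n-k}{n}\cdot\dfrac{(n+k)!}{(n-k)!}$, observing that $\tfrac1n \le \tfrac{n-k}{n} \le 1$ so its square root is polynomially bounded above and below. Collecting the pieces shows that $C$ equals $\dfrac{1}{(2n)^k}\sqrt{\dfrac{(n+k)!}{(n-k)!}}$ up to a factor that is $\Theta(1)\cdot\sqrt{(n-k)/n}$, which delivers both the upper bound (absorbed into $\lesssim$) and the lower bound. The only genuine work is this last bookkeeping: tracking the direction of each $\Theta(1)$ constant and each $\mathrm{poly}(n)$ factor so that the two one-sided inequalities come out correctly, and invoking the odd-case version of \eqref{eq:df} for $(2k\pm1)!!$; there is no analytic difficulty once the closed form for $\ell_k$ is in hand.
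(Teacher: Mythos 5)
Your proposal is correct and takes essentially the same route as the paper: telescope the recurrence to write the leading coefficient of $\phi_k$ as $2^k\prod_{d=1}^k\alpha_{d-1}$, set $C=1/\ell_k$, and evaluate the product of the $\alpha$'s factor by factor (you do this exactly via double-factorial identities and the central binomial bound, where the paper instead bounds $\prod_{i=1}^k(1-\tfrac{1}{4i^2})$ by its technical claim). The one caveat is that your exact closed form yields the lower bound only up to a constant and a $\sqrt{(n-k)/n}$ factor, not with the literal constant $2$ — but the paper's own computation has the same slack (it silently replaces $\prod_{d=1}^k(n^2-d^2)=\tfrac{n-k}{n}\cdot\tfrac{(n+k)!}{(n-k)!}$ by $\tfrac{(n+k)!}{(n-k)!}$), and since the lemma is only ever invoked up to $\mathrm{poly}(n)$ factors this discrepancy is immaterial.
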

\begin{proof}
From the recurrence definition of the Gram polynomials in Equation~\ref{eq:6}, we have that
\[
x\cdot\phi_i=\frac{1}{2\alpha_{i-1}}\phi_{i+1}+\frac{1}{2\alpha_{i-2}}\phi_{i-1}
\]
Application of this, along with the base case $\phi_0=1$, yields that the highest degree term of $x^k$ when expressed in the Gram basis is given by 
\begin{equation}
\label{e:eq}
C=\prod_{i=0}^k\frac{1}{2\alpha_i}=2^{-(k+1)}\cdot\prod_{i=0}^k\frac{1}{\alpha_i}
\end{equation}

From the definition of the $\alpha_i$ in Equation~\ref{eq:7}, we have that

\begin{align*} 
\prod_{i=0}^k\alpha_i &= \frac{n^k}{k!}\sqrt{\frac{(1^2-1/4)(2^2-1/4)\cdot ...\cdot (k^2-1/4)}{(n^2-1^2)(n^2-2^2)\cdot ...\cdot(n^2-k^2)}}\\
  &= \frac{n^k}{k!}\sqrt{\frac{(1^2-1/4)(2^2-1/4)\cdot ...\cdot (k^2-1/4)}{(n+1)(n-1)(n+2)(n-2)\cdot ...\cdot(n+k)(n-k)}}\\
    &= n^k\cdot\sqrt{\frac{(n-k)!}{(n+k)!}}\cdot\sqrt{\frac{(1^2-1/4)(2^2-1/4)\cdot ...\cdot (k^2-1/4)}{k^2\cdot (k-1)^2\cdot ...\cdot 1^2}}\\
    &= n^k\cdot\sqrt{\frac{(n-k)!}{(n+k)!}}\cdot \prod_{i=1}^k\frac{(i-1/2)(i+1/2)}{i^2}.
            \end{align*}
Application of the upper bound in Claim~\ref{c:technical} yields that $\prod_{i=0}^k\alpha_i\leq n^k\cdot\sqrt{\frac{(n-k)!}{(n+k)!}}$. This, in conjunction with Equation~\ref{e:eq}, justifies the lower bound in the statement of this lemma. For the upper bound, application of the lower bound in Claim~\ref{c:technical} yields that $\prod_{i=0}^k\alpha_i\geq \frac{1}{k^2}\cdot n^k\cdot\sqrt{\frac{(n-k)!}{(n+k)!}}$, which completes the proof in light of Equation~\ref{e:eq}.
\end{proof}

\section{Discrete approximations for the monomial}
\label{s:moneasy}

\begin{cor}
\label{cor:maineasy}
Fix integers $k<n$. There exists a degree at most $k-1$ polynomial approximation for the monomial $x^k$ over $\dout_n$ with error $\epsilon$ satisfying:
\[
\epsilon\lesssim  (2(n+1))^{-k}\cdot\sqrt{\frac{(n+k+1)!}{(n-k+1)!}}
\]
\end{cor}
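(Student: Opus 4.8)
The goal is to approximate $x^k$ over $\dout_n$ by a polynomial of degree $\le k-1$ with small error. The key observation is that the best degree-$(k-1)$ approximation to $x^k$ over a discrete set is governed by the leading Gram coefficient: if we expand $x^k = \sum_{d=0}^k c_d \phi_d(x)$ in the Gram basis, then subtracting off everything of degree $\le k-1$ leaves $c_k \phi_k(x)$, and $\|c_k\phi_k\| = |c_k|$ by the normalization (\ref{eq:orthfact}). So over the relevant discrete set the $\ell_2$-error is exactly $|c_k| = C$, and the $\ell_\infty$-error is at most $\sqrt{N}\cdot C$ where $N$ is the number of points, which is absorbed into $\lesssim$.

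The first step is to sort out which discrete set we are working over. Lemma~\ref{lem:monomialgram} gives the leading Gram coefficient for the basis with parameter $n$, which is orthogonal over $\din_n$. The corollary instead asks for approximation over $\dout_n$. Here I would use the relations (\ref{eq:stretch}) and (\ref{eq:shift}): by (\ref{eq:stretch}), $\dout_n = \{\frac{n+1}{n} x : x \in \din_{n+1}\}$, so approximating $x^k$ over $\dout_n$ is, after the linear rescaling $x \mapsto \frac{n+1}{n}x$, equivalent to approximating $\big(\frac{n+1}{n}\big)^k x^k$ over $\din_{n+1}$. The factor $\big(\frac{n+1}{n}\big)^k$ is polynomially bounded in $n$ (since $k < n$ it is at most $e$), so it gets swallowed by $\lesssim$. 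Thus it suffices to bound the error of the best degree-$(k-1)$ approximation to $x^k$ over $\din_{n+1}$, which is exactly the leading Gram coefficient $C$ for the basis with parameter $n+1$.

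The second step is to plug $n+1$ in place of $n$ into Lemma~\ref{lem:monomialgram}: the leading coefficient is $C \lesssim \frac{1}{(2(n+1))^k}\sqrt{\frac{(n+1+k)!}{(n+1-k)!}}$, which is precisely the claimed bound $(2(n+1))^{-k}\sqrt{\frac{(n+k+1)!}{(n-k+1)!}}$. Combining this with the $\ell_\infty \le \sqrt{N}\cdot \ell_2$ conversion (with $N = |\din_{n+1}| = n+1$, again polynomial and absorbed by $\lesssim$) and the rescaling factor gives the result.

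The only mildly delicate point is the $\ell_2$-to-$\ell_\infty$ conversion and making sure the polynomial constructed (namely $x^k - \sum_{d<k} c_d \phi_d(x)$, rescaled back) genuinely has degree $\le k-1$ and is evaluated on the right set — but this is immediate from the uniqueness of the Gram expansion and the fact that $\phi_k$ has exact degree $k$, so no real obstacle arises. Everything else is bookkeeping with the $\lesssim$ notation hiding the $\mathrm{poly}(n)$ factors from $\sqrt{N}$ and $(\frac{n+1}{n})^k$.
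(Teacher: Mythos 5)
Your proposal is correct and follows essentially the same route as the paper: rescale via Equation~\ref{eq:stretch} to reduce to approximating $\bigl(\tfrac{n+1}{n}\bigr)^k x^k$ over $\din_{n+1}$, truncate its Gram expansion (parameter $n+1$) at degree $k-1$, bound the pointwise error by the leading Gram coefficient using the normalization~(\ref{eq:orthfact}) (your $\sqrt{N}$ factor is exactly the paper's Cauchy--Schwarz step), and invoke Lemma~\ref{lem:monomialgram} with $n+1$ in place of $n$, absorbing the $\bigl(\tfrac{n+1}{n}\bigr)^k\le e$ and $\sqrt{n+1}$ factors into $\lesssim$.
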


\begin{proof}
It suffices to provide a degree at most $k-1$ approximation $p$ to the monomial $(\frac{n+1}{n}\cdot x)^k$ over $\din_{n+1}$ because then the degree at most $k-1$ approximation $p':=p(\frac{n}{n+1}\cdot x)$ will be an approximation to $x^k$ over $\dout_n$, by Equation~\ref{eq:stretch}. By Lemma~\ref{lem:monomialgram}, the expansion of $(\frac{n+1}{n}\cdot x)^k$ in the Gram basis with parameter $n+1$ is given by
\[
C_k\cdot \phi_k+C_{k-1}\cdot\phi_{k-1}+...+C_0\cdot\phi_0, 
\]
where $C_k\lesssim \left(\frac{n+1}{n}\right)^k\cdot \frac{1}{(2(n+1))^k}\cdot\sqrt{\frac{(n+k+1)!}{(n-k+1)!}}\lesssim \frac{1}{(2(n+1))^k}\cdot\sqrt{\frac{(n+k+1)!}{(n-k+1)!}}$. By Equation~\ref{eq:orthfact} and Cauchy-Schwarz, $\max_{\din_{n+1}}\phi_k\lesssim 1$ and the corollary follows by taking the approximation $\sum_{i=0}^{k-1}C_i\cdot\phi_i$. 
\end{proof}

\paragraph{Comparison to the uniform approach.} Newman and Rivlin~\cite{newman1976approximation}, and Sachdeva and Vishnoi~\cite{sachdeva2013approximation} showed that any uniform approximation to the monomial $x^k$ over $[-1,1]$ will have error $2^{-k+1}$ (and that this is tight). For large enough values of $n$ and $k$, the bound in the statement of Corollary~\ref{cor:maineasy} is smaller. In Section~\ref{s:ub} we see that the bound $2^{-k+1}$ is insufficient to get a non-trivial indistinguishability upper bound for $k$ close to $n$.

\section{Hardness of discrete monomial approximations}
\label{s:monhard}
\begin{cor}
\label{cor:mainhard}
Fix integers $k<n$. Any degree at most $k-1$ polynomial approximation to the monomial $x^k$ must have error over $\dout_n$ at least:
\[
(2n)^{-k}\cdot\sqrt{\frac{(n+k)!}{(n-k)!}}
\]
\end{cor}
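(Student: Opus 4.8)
The plan is to combine orthonormality of the Gram polynomials with the leading-coefficient estimate already established in Lemma~\ref{lem:monomialgram}. The one bookkeeping point to keep in mind throughout is that $\dout_n$ has $n+1$ points, so the relevant Gram family is the one with parameter $n+1$ (acting on $\din_{n+1}$), and the two sets are related by the rescaling $\dout_n=\{\tfrac{n+1}{n}x:x\in\din_{n+1}\}$ from Equation~\ref{eq:stretch}.

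First I would establish the analogous lower bound over $\din_{n+1}$. Let $\{\phi_d\}_{d=0,\dots,n}$ be the Gram polynomials with parameter $n+1$; since $k\le n$, the polynomial $\phi_k$ is defined, and I would write $x^k=\sum_{d=0}^{k}c_d\phi_d$ with $c_k=C$ the leading Gram coefficient. For any $q$ of degree at most $k-1$, expanding $q=\sum_{d=0}^{k-1}q_d\phi_d$ gives $x^k-q=C\phi_k+\sum_{d=0}^{k-1}(c_d-q_d)\phi_d$, so by orthonormality (Equation~\ref{eq:orthfact}),
\[
\frac{1}{n+1}\sum_{x\in\din_{n+1}}\bigl(x^k-q(x)\bigr)^2=C^2+\sum_{d=0}^{k-1}(c_d-q_d)^2\ \ge\ C^2 .
\]
Since an average over a set is at most the maximum over that set, this forces $\max_{x\in\din_{n+1}}|x^k-q(x)|\ge C$, and Lemma~\ref{lem:monomialgram} with parameter $n+1$ gives $C\ge\frac{2}{(2(n+1))^k}\sqrt{(n+k+1)!/(n-k+1)!}$.

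To transfer this to $\dout_n$, I would take an arbitrary degree-at-most-$(k-1)$ polynomial $p$ and set $q(x):=\bigl(\tfrac{n}{n+1}\bigr)^k p\bigl(\tfrac{n+1}{n}x\bigr)$, again of degree at most $k-1$. For $x\in\din_{n+1}$, writing $y:=\tfrac{n+1}{n}x\in\dout_n$, one has $|x^k-q(x)|=\bigl(\tfrac{n}{n+1}\bigr)^k|y^k-p(y)|$, and as $x$ ranges over $\din_{n+1}$ the point $y$ ranges over $\dout_n$; hence $\max_{y\in\dout_n}|y^k-p(y)|=\bigl(\tfrac{n+1}{n}\bigr)^k\max_{x\in\din_{n+1}}|x^k-q(x)|\ge\bigl(\tfrac{n+1}{n}\bigr)^kC$. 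Plugging in the bound on $C$, the factors $(n+1)^k$ cancel (since $\bigl(\tfrac{n+1}{n}\bigr)^k\cdot\tfrac{2}{(2(n+1))^k}=\tfrac{2}{(2n)^k}$), leaving $\tfrac{2}{(2n)^k}\sqrt{(n+k+1)!/(n-k+1)!}$; finally $(n+k+1)!/(n-k+1)!=\tfrac{n+k+1}{n-k+1}\cdot(n+k)!/(n-k)!\ge(n+k)!/(n-k)!$ yields the stated bound, with a factor of $2$ to spare.

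I do not expect a genuine obstacle here. The only things requiring care are the parameter shift $n\mapsto n+1$ — a consequence of $\dout_n$ carrying one more point than its index count — and the accompanying rescaling via Equation~\ref{eq:stretch} that matches the $\dout_n$ approximation problem to the $\din_{n+1}$ one; beyond that the argument is just Parseval's identity plus the already-proved Lemma~\ref{lem:monomialgram}.
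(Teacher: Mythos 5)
Your proof is correct, and it differs from the paper's in how it bridges the gap between $\din$ and $\dout$. The paper proves hardness over $\din_n$ using the parameter-$n$ Gram family (this is exactly your Parseval step, stated there as Lemma~\ref{lem:gramorth}, combined with Lemma~\ref{lem:monomialgram}), and then transfers the hardness from $\din_n$ to $\dout_n$ via Lemma~\ref{lem:doutdin}, which relies on the shift relation~(\ref{eq:shift}) and the correction term $p(t)-p(t+1/n)$ to keep the competing approximation at degree $k-1$. You instead work with the parameter-$(n{+}1)$ family on $\din_{n+1}$ and use the stretch bijection~(\ref{eq:stretch}) to rescale an arbitrary degree-$(k{-}1)$ approximant on $\dout_n$ into one on $\din_{n+1}$ --- the same device the paper uses on the upper-bound side in Corollary~\ref{cor:maineasy}, but applied here in the hardness direction. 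Your route is self-contained and arguably cleaner for this corollary, since the rescaling is an exact bijection with a factor $\bigl(\tfrac{n+1}{n}\bigr)^k\geq 1$ that only helps, and it even yields the marginally stronger bound $\tfrac{2}{(2n)^k}\sqrt{(n+k+1)!/(n-k+1)!}$; the checks you flag (applying Lemma~\ref{lem:monomialgram} with parameter $n+1$, which needs only $k<n+1$, and the cancellation of the $(n+1)^k$ factors) all go through. What the paper's shift-based Lemma~\ref{lem:doutdin} buys is reusability: it transfers $\din_n$-hardness to $\dout_n$ for an \emph{arbitrary} degree-$k$ polynomial, not just $x^k$, and this is needed again verbatim in the proof of Theorem~\ref{thm:mainlower} for the polynomial $p_{k/2}$, whereas your rescaling argument is specific to the monomial (for a general $p$ one would have to track how rescaling distorts its Gram coefficients across the two parameter families).
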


The proof of the main corollary of this section appears at the end of this section after two lemmas have been established.

\begin{lemma}
\label{lem:gramorth}
Let $p$ be a degree $k$ polynomial with degree $k$ coefficient $C$ in the Gram basis with parameter $n$. Then, any degree $k-1$ approximating polynomial will have error at least $|C|$ at some point over $\din_n$.
\end{lemma}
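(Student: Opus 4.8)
The plan is to exploit the orthonormality of the Gram basis with respect to the discrete bilinear form $(f,g)=\frac1n\sum_{x\in\din_n}f(x)g(x)$. Write $p=\sum_{d=0}^{k}c_d\phi_d$ with $c_k=C$, and let $q$ be an arbitrary polynomial of degree at most $k-1$. Since $\deg(q)\le k-1$, the Gram expansion of $q$ involves only $\phi_0,\dots,\phi_{k-1}$, so orthogonality gives $(q,\phi_k)=0$; together with $(\phi_i,\phi_j)=\delta_{ij}$ and $\|\phi_k\|=1$ from Equation~\ref{eq:orthfact}, this yields the exact identity $(p-q,\phi_k)=(p,\phi_k)=c_k=C$.

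Next I would convert this identity into an $L^\infty$ lower bound on the approximation error over $\din_n$. Expanding the inner product, $C=\frac1n\sum_{x\in\din_n}(p(x)-q(x))\phi_k(x)$, so the triangle inequality gives $|C|\le\bigl(\max_{x\in\din_n}|p(x)-q(x)|\bigr)\cdot\frac1n\sum_{x\in\din_n}|\phi_k(x)|$. It then remains only to observe that $\frac1n\sum_{x\in\din_n}|\phi_k(x)|\le 1$, which is immediate from Cauchy--Schwarz: $\frac1n\sum_{x}|\phi_k(x)|\le\bigl(\frac1n\sum_{x}\phi_k(x)^2\bigr)^{1/2}\bigl(\frac1n\sum_{x}1\bigr)^{1/2}=\|\phi_k\|\cdot 1=1$. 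Rearranging yields $\max_{x\in\din_n}|p(x)-q(x)|\ge|C|$, which is exactly the claimed bound.

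There is essentially no serious obstacle: the argument is a one-line consequence of orthonormality plus a single application of Cauchy--Schwarz to bound $\|\phi_k\|_1$ by $\|\phi_k\|_2$. The only point requiring care is that the error must be measured over exactly the point set $\din_n$ entering the definition of the bilinear form, which is why the lemma is phrased for $\din_n$ rather than $\dout_n$; the transfer to $\dout_n$ needed for Corollary~\ref{cor:mainhard} will be carried out separately using the shift relation in Equation~\ref{eq:shift}.
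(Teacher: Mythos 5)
Your proof is correct. It rests on the same essential ingredient as the paper's proof --- orthonormality of the Gram polynomials with respect to the discrete inner product over $\din_n$, together with the observation that the degree-$k$ Gram coefficient of $p-q$ is still $C$ --- but the way the orthonormality is exploited differs slightly. The paper uses a Parseval identity, $\E_{t\sim\din_n}[(p(t)-q(t))^2]=\sum_d c_d^2 \ge C^2$, and then the trivial fact that the maximum error dominates the root-mean-square error; you instead pair the error against $\phi_k$, obtaining $(p-q,\phi_k)=C$, and then combine a H\"older-type estimate with Cauchy--Schwarz to bound $\frac1n\sum_{x\in\din_n}|\phi_k(x)|\le \|\phi_k\|=1$ (note $\frac1n\sum_{x\in\din_n}1=1$ since $|\din_n|=n$, so your application is valid). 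The two routes are interchangeable here: the paper's version yields the marginally stronger conclusion that even the mean-square error over $\din_n$ is at least $C^2$, while yours makes explicit a dual witness (the weighting $\phi_k/n$ on $\din_n$), which is the formulation most directly aligned with the LP-duality perspective used elsewhere in the paper. Your closing remark that the transfer from $\din_n$ to $\dout_n$ is handled separately via the shift relation is exactly the role of Lemma~\ref{lem:doutdin}.
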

\begin{proof}
Let $q$ be any degree at most $k-1$ polynomial. Let $c_i$ be the degree $i$ coefficient in the Gram representation of $p-q$, and note that because the degree of $q$ is at most $k-1$, we have that $c_k=C$. By orthogonality of the Gram polynomials, and Equation~\ref{eq:orthfact},
 \[
\E_{t \sim \din_n }[(p(t) - q(t))^2] 
  = c_0^2 + \sum_{d =1}^k (c_d)^2 \E_{t \sim \din_n}[\phi_d(t)^2]
  \geq c_k^2=C^2.
\]
It follows that the approximation error $\abs{p(t) - q(t)}$ must exceed $|C|$ for some $t \in \din_n$.
\end{proof}

\begin{lemma}
\label{lem:doutdin}
Let $p$ be a degree $k$ polynomial such that for any degree at most $k-1$ polynomial $q$, $\max_{t\in\din_n}\{|p(t)-q(t)|\}\geq\epsilon$. Then, for any degree $k-1$ polynomial $q'$, $\max_{t\in\dout_n}\{|p(t)-q'(t)|\}\geq\epsilon$. 
\end{lemma}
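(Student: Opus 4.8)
The plan is to reduce approximation over $\din_n$ to approximation over $\dout_n$ via the shift $x\mapsto x+1/n$ supplied by Equation~\ref{eq:shift}, and then invoke the hypothesis on $\din_n$. Precisely, given an arbitrary polynomial $q'$ of degree at most $k-1$, I will build a polynomial $q$ of degree at most $k-1$ satisfying $\max_{t\in\din_n}|p(t)-q(t)|\le\max_{s\in\dout_n}|p(s)-q'(s)|$; since the left-hand side is at least $\epsilon$ by assumption, the conclusion follows at once.

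For the construction, set
\[
q(x):=\bigl(p(x)-p(x+\tfrac1n)\bigr)+q'\bigl(x+\tfrac1n\bigr).
\]
First one checks $\deg q\le k-1$: because $p$ has degree $k$, the degree-$k$ terms of $p(x)$ and $p(x+1/n)$ cancel, so $p(x)-p(x+1/n)$ has degree at most $k-1$, and $q'(x+1/n)$ has degree at most $k-1$ by hypothesis. Next, Equation~\ref{eq:shift} says exactly that $t+\tfrac1n\in\dout_n$ whenever $t\in\din_n$, and by construction $p(t)-q(t)=p(t+\tfrac1n)-q'(t+\tfrac1n)$, so $|p(t)-q(t)|\le\max_{s\in\dout_n}|p(s)-q'(s)|$ for every $t\in\din_n$. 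Taking the maximum over $t\in\din_n$ and applying the assumed bound on $\din_n$ gives $\max_{s\in\dout_n}|p(s)-q'(s)|\ge\epsilon$, as required.

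The only step that needs any care is verifying that the correction term $p(x)-p(x+1/n)$ really has degree at most $k-1$, i.e., that the leading term of $p$ is annihilated by the translation — this is what keeps $q$ a legitimate degree-$(k-1)$ competitor, and it is immediate from expanding $(x+1/n)^k$. I would also remark on the direction of the implication: $\din_n+\tfrac1n$ is a proper subset of $\dout_n$ (it omits the endpoint $-1$), so this trick transfers hardness from the inner grid to the strictly larger outer grid but not the other way, which is exactly the direction needed to feed Lemma~\ref{lem:gramorth} into Corollary~\ref{cor:mainhard}.
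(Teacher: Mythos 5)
Your proof is correct and is essentially the paper's own argument: the paper proves the contrapositive using exactly the same shift $t\mapsto t+1/n$ from Equation~\ref{eq:shift} and the same competitor $\tilde{p}(t+1/n)+p(t)-p(t+1/n)$, with the identical observation that $p(t)-p(t+1/n)$ has degree at most $k-1$. Phrasing it directly (building $q$ from an arbitrary $q'$) rather than contrapositively is only a cosmetic difference.
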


\begin{proof}
We consider the contrapositive and show that existence of a degree at most $k-1$ polynomial $\tilde{p}$ for $p$ over $\dout_n$ with error at most $\epsilon$ implies an approximation for $p$ over $\din_n$ with the same degree and error parameters. We have that $\tilde{p}(x+1/n)$ is a degree $k-1$ approximation of $p(1+1/n)$ over $\{x-\frac1n : x\in\dout_n\}\supset \din_n$, where the set relation follows from Equation~\ref{eq:shift}. Our approximation is then $\tilde{p}(t+1/n)+p(t)-p(t+1/n)$, which is degree $k-1$ because $p(t)-p(t+1/n)$ is degree $k-1$. We have that $\max_{t\in \din_n}|\left(\tilde{p}(t+1/n)+p(t)-p(t+1/n)\right)-p(t)|=\max_{t\in \din_n}|\tilde{p}(t+1/n)-p(t+1/n)|\leq\epsilon$.
\end{proof}

\begin{proof}[Proof of Corollary~\ref{cor:maineasy}]
Lemma~\ref{lem:monomialgram} and Lemma~\ref{lem:gramorth} imply that any degree at most $k-1$ polynomial approximation to $x^k$ must have error at least $(2n)^{-k}\cdot\sqrt{\frac{(n+k)!}{(n-k)!}}$ over $\din_n$. The corollary then follows from Lemma~\ref{lem:doutdin}.
\end{proof}

\section{Symmetrization and duality}
\label{s:symmdual}

Our main upper and lower bounds will be justified by reducing to an approximation theoretic question using a linear programming duality relation.

\begin{claim}[see, for example,~\cite{BIVW}]
\label{c:dualityclaim}
$\adeg_{\eps/2}(F) \geq k$ if and only if there exists a pair of perfectly $k$-wise indistinguishable distributions $\mu$, $\nu$ over $\{0,1\}^n$ such that $\E_{x \sim \mu}[F(x)] - \E_{y \sim \nu}[F(y)] \geq \eps$.
\end{claim}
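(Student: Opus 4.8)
The plan is to invoke the standard linear-programming-duality bridge between approximate degree and indistinguishability (as in~\cite{BIVW}); I sketch both implications.

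The reverse implication (``if'') is a one-line moment-matching argument. If $\mu,\nu$ are perfectly $k$-wise indistinguishable, then $\E_{x\sim\mu}[q(x)]=\E_{y\sim\nu}[q(y)]$ for every polynomial $q$ of degree at most $k$, since over $\{0,1\}^n$ such a $q$ is a linear combination of monomials $\prod_{i\in S}x_i$ with $|S|\le k$, and each such expectation depends only on the marginal on $S$. Hence for any candidate approximating polynomial $q$ of degree below $k$,
\[
2\,\|F-q\|_\infty \;\ge\; \bigl(\E_\mu[F]-\E_\mu[q]\bigr)+\bigl(\E_\nu[q]-\E_\nu[F]\bigr) \;=\; \E_\mu[F]-\E_\nu[F] \;\ge\; \eps,
\]
so no low-degree polynomial approximates $F$ to within $\eps/2$, giving $\adeg_{\eps/2}(F)\ge k$.

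For the forward implication, set up the (finite) linear program whose variables are the coefficients $\{c_S\}$ of a polynomial $p$ of degree below $k$ together with a scalar $\delta\ge 0$, minimizing $\delta$ subject to $|p(x)-F(x)|\le\delta$ for every $x\in\{0,1\}^n$. This LP is feasible (e.g.\ $p\equiv 1/2$) and bounded, so strong duality holds, and its optimal value is exactly the least error of a degree-below-$k$ approximation to $F$; the hypothesis $\adeg_{\eps/2}(F)\ge k$ says this value is at least $\eps/2$. Dualizing, the free coefficients $c_S$ produce orthogonality constraints and the scalar $\delta$ produces a normalization constraint: the dual maximizes $\sum_x\psi(x)F(x)$ over signed weight functions $\psi\colon\{0,1\}^n\to\R$ with $\sum_x|\psi(x)|\le 1$ and $\sum_x\psi(x)\prod_{i\in S}x_i=0$ for all $S$ with $|S|<k$. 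By strong duality we obtain such a $\psi$ with $\sum_x\psi(x)F(x)\ge\eps/2$.

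The remaining and most delicate step is converting $\psi$ into an honest pair of distributions. Write $\psi=\psi^+-\psi^-$ for its positive and negative parts; the orthogonality constraint for $S=\emptyset$ forces $\sum_x\psi(x)=0$, hence $\|\psi^+\|_1=\|\psi^-\|_1$, and since the normalization constraint is tight at optimality this common mass equals $1/2$. Then $\mu:=2\psi^+$ and $\nu:=2\psi^-$ are genuine probability distributions; the orthogonality constraints say exactly that every marginal of $\mu$ and $\nu$ on fewer than $k$ coordinates coincides --- i.e.\ (after tracking the degree/order bookkeeping) that $\mu,\nu$ are perfectly $k$-wise indistinguishable --- while $\E_\mu[F]-\E_\nu[F]=2\sum_x\psi(x)F(x)\ge\eps$. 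I expect the only real care to be needed in this translation and the accompanying conventions: the precise correspondence between the degree of the approximant, the order of indistinguishability, and the strict-versus-non-strict boundary at error exactly $\eps/2$. The underlying LP-duality argument itself is routine.
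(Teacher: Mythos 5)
The paper offers no proof of this claim at all (it is quoted with a citation to~\cite{BIVW}), so there is nothing internal to compare against; your moment-matching argument for the ``if'' direction and the LP-duality extraction of $\mu=2\psi^+$, $\nu=2\psi^-$ for the forward direction is exactly the standard route taken in the cited literature, and structurally it is the right one.

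The one genuine problem is the index mismatch that you set aside as ``degree/order bookkeeping'': it is not bookkeeping, and as written your forward direction does not deliver what the claim states. The LP you dualize ranges over approximants of degree \emph{below} $k$, so the dual constraints force $\psi$ to be orthogonal only to monomials $\prod_{i\in S}x_i$ with $|S|\le k-1$; consequently the distributions you extract have matching marginals only on sets of size at most $k-1$, i.e.\ they are $(k-1)$-wise indistinguishable, not $k$-wise. Nor can this be repaired by dualizing the degree-$\le k$ LP instead, because the hypothesis $\adeg_{\eps/2}(F)\ge k$ says nothing about the error of degree-$k$ approximations: take $F=\PARITY$ of the first $k$ bits, which has $\adeg_{1/4}(F)=k$ and yet has zero advantage against every perfectly $k$-wise indistinguishable pair (its expectation is determined by the size-$\le k$ marginals), so the claim with literally matched indices is false. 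What your argument, carried out with honest indices, does prove is the corrected statement: $\adeg_{\eps/2}(F)\ge k$ corresponds to a $(k-1)$-wise indistinguishable pair with advantage $\ge\eps$ (equivalently, pair $k$-wise indistinguishability with $\adeg_{\eps/2}(F)>k$), modulo the non-strict boundary at error exactly $\eps/2$ that you already flagged. This corrected form is also the one actually used in Section~\ref{s:lb}, where hardness for degree-$(k-1)$ approximations is converted into $(k-1)$-wise indistinguishable distributions, so the off-by-one sits in the claim as printed rather than in your overall strategy; but your final paragraph should state the $(k-1)$-wise conclusion rather than assert $k$-wise indistinguishability. Two small additional points: tightness of the normalization constraint (giving mass exactly $1/2$ to each of $\psi^{\pm}$) needs the dual optimum to be strictly positive, which is fine for $\eps>0$ but should be said, and the equivalence between agreement of all small marginals and vanishing correlation with all low-degree monomials deserves one line of inclusion--exclusion.
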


We are interested in Boolean functions as statistical tests that witness $k$ bits of a sample from a distribution. To this end, let $Q_w$ denote the function on $\{0, 1\}^n$ that outputs 1 if and only if the Hamming weight of $x|_{\{1, \dots, k\}}$, the first $k$ bits of the input, is exactly $w$. It will be easier for us to work over functions that are symmetrised, i.e. not a $k$-junta.

\begin{fact}
\label{fact:pw}
Let $S\subseteq [n]$ be any set of size $k$. There exists a univariate polynomial $p_w$ of degree at most $k$ such that
the following holds.
For all $t \in \dout_n$, $p_w(t) = \E_{Z}[Q_w(Z|_S)]$ where $Z$ is a random string of Hamming weight $\phi^{-1}(t) = (1 - t)n/2 \in \{0, 1, \dots, n\}$.
\end{fact}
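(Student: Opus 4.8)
The statement asserts the existence of a single univariate polynomial $p_w$ of degree at most $k$ that computes, as a function of the (rescaled) Hamming weight of the full $n$-bit input, the acceptance probability of $Q_w$ — which looks only at a fixed size-$k$ subset $S$. The plan is the standard symmetrization-of-a-junta argument: fix $|S|=k$ and a string $Z$ of Hamming weight $h\in\{0,\dots,n\}$ chosen uniformly among all such strings. Conditioned on the weight $h$, the Hamming weight of $Z|_S$ is a hypergeometric random variable, so $\E_Z[Q_w(Z|_S)] = \Pr[\text{weight of }Z|_S = w] = \binom{h}{w}\binom{n-h}{k-w}/\binom{n}{k}$. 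The first step is to observe that this expression, viewed as a function of $h$, is a polynomial in $h$ of degree at most $k$: the numerator $\binom{h}{w}\binom{n-h}{k-w}$ is a product of a degree-$w$ polynomial in $h$ (namely $h(h-1)\cdots(h-w+1)/w!$) and a degree-$(k-w)$ polynomial in $h$ (namely $(n-h)(n-h-1)\cdots(n-h-k+w+1)/(k-w)!$), so it has degree exactly $k$ in $h$, and the denominator is a constant.

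The second step is a change of variables. We are told to set $t = \phi^{-1}$ of the weight, i.e. $t = 1 - 2h/n$, equivalently $h = (1-t)n/2$; this is an affine bijection between $h\in\{0,\dots,n\}$ and $t\in\dout_n$. Since $h$ is an affine (degree-one) function of $t$, substituting $h = (1-t)n/2$ into the degree-$k$ polynomial in $h$ obtained above yields a polynomial $p_w(t)$ of degree at most $k$ in $t$. This $p_w$ is by construction exactly $\E_Z[Q_w(Z|_S)]$ at every $t\in\dout_n$, which is the claim. By Fact~\ref{fact:symmetricmarginals}/symmetry the choice of $S$ among size-$k$ sets is irrelevant, so it suffices to treat one such $S$.

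There is no real obstacle here; the only points requiring a line of care are (i) checking the hypergeometric identity $\E_Z[Q_w(Z|_S)] = \binom{h}{w}\binom{n-h}{k-w}/\binom{n}{k}$, which is immediate by counting the strings of weight $h$ whose restriction to $S$ has weight $w$, and (ii) the degenerate ranges of $h$ where $\binom{h}{w}$ or $\binom{n-h}{k-w}$ vanishes as a combinatorial quantity — but this is harmless, because the falling-factorial polynomials $h(h-1)\cdots(h-w+1)$ and $(n-h)\cdots(n-h-k+w+1)$ already evaluate to $0$ at exactly those integer points, so the polynomial expression agrees with the true probability on all of $\{0,\dots,n\}$, not merely on the ``generic'' range. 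Thus the polynomial identity and the probabilistic identity coincide on the whole discrete set $\dout_n$, completing the argument.
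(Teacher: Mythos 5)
Your proof is correct, but it takes a genuinely different route from the paper. The paper invokes Minsky and Papert's symmetrization lemma: for any multivariate polynomial $P$ on $\{0,1\}^n$ there is a univariate $p$ of degree at most the total degree of $P$ with $p(i)=\E_{|x|=i}[P(x)]$; applying this to $P(x)=Q_w(x|_S)$, whose total degree is at most $k$ because it is a $k$-junta, and composing with the affine map $t\mapsto (1-t)n/2$ gives $p_w$. You instead compute the expectation explicitly: conditioned on weight $h$, the weight of $Z|_S$ is hypergeometric, so $\E_Z[Q_w(Z|_S)]=\binom{h}{w}\binom{n-h}{k-w}/\binom{n}{k}$ (your form is equivalent to the perhaps more immediate $\binom{k}{w}\binom{n-k}{h-w}/\binom{n}{h}$ by the standard hypergeometric symmetry), and you observe that the falling-factorial representation makes this an explicit polynomial of degree exactly $k$ in $h$, hence degree at most $k$ in $t$ after the affine substitution; your remark that the falling factorials vanish precisely at the degenerate weights is the right point to check and makes the identity hold on all of $\dout_n$. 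What each approach buys: the paper's argument is shorter given the standard lemma and generalizes to any junta or low-degree test without computing anything, while your explicit formula gives more --- it immediately exhibits the $k$ zeros of $p_w$ (Equation~\ref{eq:zeros}) and could be used to read off the leading coefficient directly, which the paper instead derives separately in Claim~\ref{claim:leadpwterm} by evaluating $p_w$ at an interior point. Either proof is acceptable here.
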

\begin{proof}
This statement is a simple extension of Minsky and Papert's classic symmetrization technique \cite{MiP69} and also appears in~\cite{bmtw19}; we reproduce the proof here for convenience. Minsky and Papert showed that for any polynomial $P \colon \{0, 1\}^n \to \R$,
there exists a univariate polynomial $p$ of degree at most the total degree of $P$, such
that for all $i \in \{0, \dots, n\}$, $p(i) = \mathbb{E}_{|x|=i}[P(x)]$. 
Apply this result to $P(x) = Q_w(x|_S)$ and let $p_w(t) = p(\phi^{-1}(t)) = p\left((1 - t)n/2\right)$.
The fact then follows from the observation that the total degree of $Q_w(x|_S)$ is at most $k$,
since this function is a $k$-junta. 
\end{proof}

\begin{cor}
\label{cor:multitodout}
Suppose that for any degree $\leq k-1$ polynomial $q$ we have that $\max_{t\in\dout_n}\{|p_w(t)-q|\}\geq\epsilon/2$. Then, $\adeg_{\eps/2}(Q_w(x|_S)) \geq k-1$.
\end{cor}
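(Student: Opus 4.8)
The plan is to prove the contrapositive. Suppose $\adeg_{\eps/2}(Q_w(x|_S)) \le k-2$; then, unwinding the definition of approximate degree, there is a real polynomial $P\colon\{0,1\}^n\to\R$ with $\deg P\le k-2$ and $|P(x)-Q_w(x|_S)|\le\eps/2$ for every $x\in\{0,1\}^n$. I will turn $P$ into a degree-$(k-1)$ univariate approximation to $p_w$ on $\dout_n$ of error at most $\eps/2$, which contradicts the hypothesis. The two tools are the Minsky--Papert symmetrization already used in Fact~\ref{fact:pw} and the fact that the weight-to-point correspondence $i\mapsto\phi(i)=1-2i/n$ (equivalently $t\mapsto\phi^{-1}(t)=(1-t)n/2$) is affine and hence degree-preserving.

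Concretely, the first step is to apply Minsky--Papert symmetrization to $P$, obtaining a univariate polynomial $p$ with $\deg p\le\deg P\le k-2$ and $p(i)=\E_{|x|=i}[P(x)]$ for all $i\in\{0,\dots,n\}$. The second step is to reparametrize: set $q(t):=p(\phi^{-1}(t))=p\big((1-t)n/2\big)$. Since the substitution is affine, $q$ is a polynomial in $t$ of degree at most $k-2$, in particular at most $k-1$, so it is an admissible competitor in the hypothesis.

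The last step is to transfer the error bound from the cube to $\dout_n$. Fix $t\in\dout_n$ and let $i:=\phi^{-1}(t)\in\{0,\dots,n\}$. By Fact~\ref{fact:pw}, $p_w(t)=\E_{|Z|=i}[Q_w(Z|_S)]$, while $q(t)=p(i)=\E_{|x|=i}[P(x)]$ by construction, so
\[
\bigl|p_w(t)-q(t)\bigr|=\Bigl|\E_{|x|=i}\bigl[Q_w(x|_S)-P(x)\bigr]\Bigr|\le\max_{x:\,|x|=i}\bigl|Q_w(x|_S)-P(x)\bigr|\le\eps/2 .
\]
As $t$ was an arbitrary point of $\dout_n$, this gives $\max_{t\in\dout_n}|p_w(t)-q(t)|\le\eps/2$ with $\deg q\le k-1$, contradicting the assumed lower bound on the approximation error of $p_w$ over $\dout_n$.

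I do not anticipate a real obstacle here: the argument is just the composition of a symmetrization step we already have in hand with an affine change of variables. The one place deserving a moment's care is the degree accounting --- checking that neither the symmetrization nor the substitution $i=(1-t)n/2$ increases the degree --- and, if one wants the contradiction to be strict rather than merely on the boundary ``$\ge$'' versus ``$>$'', the whole argument can be rerun with $\eps$ replaced by $(1-\delta)\eps$ for an arbitrarily small $\delta>0$.
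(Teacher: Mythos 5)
Your proposal is correct and is essentially the paper's own argument: the paper dismisses the corollary as ``immediate from the contrapositive of Fact~\ref{fact:pw},'' and you have simply spelled that contrapositive out --- symmetrize the putative degree-$(k-2)$ approximant of $Q_w(x|_S)$ via Minsky--Papert, apply the affine change of variables $t\mapsto(1-t)n/2$, and transfer the pointwise error bound to $\dout_n$ by averaging over each Hamming-weight slice. The only wrinkle (the non-strict ``$\geq\eps/2$'' versus ``$\leq\eps/2$'' boundary case) is inherited from the paper's own phrasing, and you already note how to dispose of it.
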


\begin{proof}
This is immediate from the contrapositive of Fact~\ref{fact:pw}.
\end{proof}

\subsection{Properties of $p_w$}
The value $p_w(t)$ is a probability for every $t \in \dout_n$. 
Moreover, this probability must equal zero when the Hamming weight of $Z$ is less than $w$ or greater than $n - k + w$.  Therefore $p_w$ has $k$ distinct zeros at the points $Z_w = Z_{-} \cup Z_{+}$, where
\begin{equation}
\label{eq:zeros}
Z_{-}=\left\{-1+ 2h/n: h=0,...,k-w-1\right\}, \\ Z_{+} = \{1 - 2h/n: h=0,...,w-1\}.
\end{equation}
and so $p_w$ must have the form
\begin{equation}
\label{eq:pw}
p_w(t) = C_w \cdot \prod_{z \in Z_w} (t - z)
\end{equation}
for some $C_w$ that does not depend on $t$. 

\begin{claim}
\label{claim:leadpwterm}
The coefficient on the highest degree term of $p_w$ in the monomial basis is $C_w$, which equals:
\[
\frac{\binom{k}{w}\binom{n-k}{\frac12 (n-k)}}{\binom{n}{\frac12 (n-k+2w)}}\cdot \frac{n^{k}\cdot (n-k)!!^2}{(n-k+2w)!!\cdot (n-2w+k)!!}
\]
\end{claim}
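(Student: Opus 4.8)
The plan is to compute $C_w$ by evaluating $p_w$ at a conveniently chosen point where both the probabilistic interpretation from Fact~\ref{fact:pw} and the factored form in Equation~\ref{eq:pw} give tractable expressions, and then solve for $C_w$. The natural choice is the ``middle'' point $t^\star$ corresponding to a string $Z$ of Hamming weight $\phi^{-1}(t^\star) = \tfrac12(n-k+2w)$: this is the smallest weight for which the conditioning event ``$Z|_S$ has weight exactly $w$'' has its cleanest combinatorial form, and it is the value appearing in the denominators of the claimed expression. At this weight, $p_w(t^\star) = \E_Z[Q_w(Z|_S)] = \binom{k}{w}\binom{n-k}{\frac12(n-k)} / \binom{n}{\frac12(n-k+2w)}$ by a direct hypergeometric count: choosing which $w$ of the $k$ coordinates in $S$ are set and which $\frac12(n-k)$ of the remaining $n-k$ coordinates are set, over all ways to place $\frac12(n-k+2w)$ ones among $n$ coordinates.

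Next I would evaluate the product $\prod_{z\in Z_w}(t^\star - z)$ using the explicit zero sets in Equation~\ref{eq:zeros}. Writing $t^\star = 1 - \tfrac{n-k+2w}{n}$, the factors coming from $Z_{+} = \{1 - 2h/n : h = 0,\dots,w-1\}$ are $t^\star - (1 - 2h/n) = \tfrac{2}{n}(h - \tfrac12(n-k+2w) + \cdots)$ — more precisely each factor is $\tfrac{2}{n}$ times an integer, and the product over $h=0,\dots,w-1$ telescopes into a ratio of factorials; similarly the factors from $Z_{-} = \{-1 + 2h/n : h=0,\dots,k-w-1\}$ contribute $t^\star - (-1+2h/n) = \tfrac{2}{n}(\tfrac12(n-k+2w)\cdot\tfrac{n}{?} \cdots)$, again $\tfrac{2}{n}$ times an integer with a factorial-ratio product. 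Collecting the $\tfrac2n$ prefactors gives a global $(2/n)^k$, i.e.\ the $n^k$ in the numerator of the claim, and the remaining integer products should assemble into the double-factorial ratio $(n-k)!!^2 / \bigl((n-k+2w)!!\,(n-2w+k)!!\bigr)$ after matching even/odd parities (noting $n-k$, $n-k+2w$, and $n-2w+k = (n-k) + 2(k-w)$ all have the same parity, so the double factorials are consistently defined). Then $C_w = p_w(t^\star) / \prod_{z\in Z_w}(t^\star - z)$ yields the stated formula.

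The first sentence of the claim — that the monomial leading coefficient of $p_w$ equals $C_w$ — is immediate from Equation~\ref{eq:pw}, since $\prod_{z\in Z_w}(t-z)$ is monic of degree $|Z_w| = (k-w) + w = k$, which also confirms $p_w$ has exactly degree $k$ (so the $C_w$ here matches the leading-coefficient role it plays, e.g., in Lemma~\ref{lem:gramorth}).

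The main obstacle I anticipate is the bookkeeping in the second step: carefully converting each of the $k$ linear factors $(t^\star - z)$ into the form $\tfrac2n \times (\text{integer})$, tracking signs (some factors from $Z_{-}$ will be positive and some from $Z_{+}$ negative, but their product is a genuine probability divided by $C_w$, so signs must ultimately cancel or be absorbed correctly), and recognizing the resulting two integer products as $(n-k)!!$ over $(n-k+2w)!!$ and $(n-k)!!$ over $(n-2w+k)!!$ respectively. This requires a clean reindexing — for instance substituting $h \mapsto (\text{shift}) - h$ in one of the products — and a parity check to legitimize the double-factorial notation. Everything else is either a standard counting argument or a one-line appeal to Equation~\ref{eq:pw}.
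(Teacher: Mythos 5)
Your proposal follows essentially the same route as the paper: you evaluate $p_w$ at the same point $t^\star=(k-2w)/n$ (the paper's $t'$), obtain the same hypergeometric expression for $p_w(t^\star)$, and recover $C_w$ by dividing by $\prod_{z\in Z_w}(t^\star-z)$, which the paper likewise computes by telescoping the linear factors into the double-factorial ratio (pairing $Z_+$ factors with $Z_-$ factors under the WLOG assumption $w\le k/2$). The bookkeeping you defer --- including the sign $(-1)^w$ from the $w$ negative factors $t^\star-z$ with $z\in Z_+$, which the paper also glosses over since only the magnitude of $C_w$ is used later --- is exactly the content of the paper's displayed calculation, so there is no substantive difference.
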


\begin{proof}
The polynomial $p_w$ is of degree $k$ with all of its zeroes lying in $Z_w$. We evaluate $p_w$ at a point $t'=\frac{k-2w}{n}$ which is necessarily outside of $Z_w$ and thus not a zero of $p_w$. To do this, we use that the value $p_w(t')$ is the probability that $Q_w(x|_S)$ accepts given that $x$ is chosen uniformly at random, conditioned on the event that the Hamming weight of $x$ is exactly $\phi^{-1}(t')=\frac12 (n-k+2w)$. 
\[
\Pr\left[Q_w(x|_S)=1:|x|=\frac12 (n-k+2w)\right]=p_w(t')=C_w\cdot \prod_{z\in Z_w}(t'-z),
\]
from which it follows that 
\[
C_w=\frac{\binom{k}{w}\binom{n-k}{\frac12 (n-k)}}{\binom{n}{\frac12 (n-k+2w)}\cdot\prod_{z\in Z_w}(t'-z)}
\]
We have that:
\begin{align*} 
\prod_{z\in Z_w}(t'-z) &= \prod_{z\in Z_-}(t'-z)\prod_{z\in Z_+}(t'-z)\\
  &= \prod_{z\in Z_+}(z-t')^2\prod_{z\in Z_-:-z\not\in Z_+}(t'-z),
            \end{align*}
where the final equality assumes that $w\leq k/2$. This is without loss of generality; when $w>k/2$, the same calculation holds with the roles of $Z_+$ and $Z_-$ reversed. From this we compute that:

\begin{align*} 
\frac{1}{\prod_{z\in Z_w}(t'-z)} &= \frac{1}{(1-\frac{k-2}{n})^2\cdot (1-\frac{k-2}{n}+\frac2n)^2\cdot ...\cdot (1-\frac{k-2}{n}+\frac{2(w-1)}{n})^2\cdot \prod_{z\in Z_-:-z\not\in Z_+}(t'-z)}\\
  &= \frac{n^{2w}\cdot (n-k)!!^2}{(n-k+2w)!!^2\cdot \prod_{z\in Z_-:-z\not\in Z_+}(t'-z)}\\
  &= \frac{n^{2w}\cdot (n-k)!!^2}{(n-k+2w)!!^2\cdot \prod_{i=0}^{K-2w-1}(\frac{k-2w}{n}+1-\frac{2i}{n})}\\
    &= \frac{n^{k}\cdot (n-k)!!^2}{(n-k+2w)!!\cdot (n+k-2w)!!},
            \end{align*}
from which the claim follows.
\end{proof}

We also will need the following fact, which is justified in the Appendix. 
\begin{lemma}
\label{lem:argmax}
The value $C_w$ is maximized when $w=k/2$; in particular with 
\[
C_{k/2}=\frac{\binom{k}{k/2}\binom{n-k}{(n-k)/2}}{\binom{n}{n/2}}\cdot \frac{n^{k}\cdot (n-k)!!^2}{n!!^2}
\]
\end{lemma}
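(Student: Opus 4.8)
The plan is to substitute the closed form for $C_w$ from Claim~\ref{claim:leadpwterm} and observe that, after a double-factorial simplification, \emph{all} of the $w$-dependence except the single factor $\binom{k}{w}$ cancels. Concretely, since $n-2w+k = n+k-2w$, the claim's formula rearranges as
\[
C_w = \binom{n-k}{\tfrac12(n-k)}\cdot n^k\cdot (n-k)!!^2 \cdot \frac{\binom{k}{w}}{\binom{n}{\frac12(n-k+2w)}\cdot (n-k+2w)!!\cdot (n+k-2w)!!},
\]
where the first three factors are constants once $n$ and $k$ are fixed, so only the final fraction varies with $w$. I would set $m := \tfrac12(n-k+2w)$, a legitimate integer in $\{0,\dots,n\}$ because $n-k$ is even and $0\le w\le k$, and note $n+k-2w = 2(n-m)$.

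The key computation is then to evaluate the $w$-dependent denominator using $(2j)!! = 2^{j}\,j!$ from Equation~\ref{eq:df}: this gives $(n-k+2w)!! = (2m)!! = 2^{m}m!$ and $(n+k-2w)!! = (2(n-m))!! = 2^{n-m}(n-m)!$, while $\binom{n}{m} = n!/(m!\,(n-m)!)$. Multiplying the three quantities,
\[
\binom{n}{m}\cdot (2m)!!\cdot (2(n-m))!! \;=\; \frac{n!}{m!\,(n-m)!}\cdot 2^{m}m!\cdot 2^{n-m}(n-m)! \;=\; 2^{n}\,n!,
\]
which is independent of $m$, hence of $w$. Therefore $C_w = \dfrac{\binom{n-k}{(n-k)/2}\cdot n^k\cdot (n-k)!!^2}{2^{n}\,n!}\cdot \binom{k}{w}$, and since the prefactor is a positive constant not depending on $w$, $C_w$ is maximized exactly where $\binom{k}{w}$ is, namely at $w = k/2$ (for $k$ even; for odd $k$ the maximizer is $(k\pm1)/2$ with the analogous formula).

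Finally, to present $C_{k/2}$ in the stated shape, I would substitute $w=k/2$ and rewrite the constant $2^{n}n!$ using $n!! = 2^{n/2}(n/2)!$ and $\binom{n}{n/2} = n!/(n/2)!^2$, which gives $\binom{n}{n/2}\cdot (n!!)^2 = \tfrac{n!}{(n/2)!^2}\cdot 2^{n}(n/2)!^2 = 2^{n}n!$; replacing $2^{n}n!$ accordingly yields precisely the claimed expression for $C_{k/2}$. There is no serious obstacle here beyond bookkeeping: the only point needing a little care is the parity assumption ($n$, $k$, and hence $n-k$ even) that makes all the binomial-coefficient and double-factorial indices meaningful — which is exactly the regime in which the stated formula is written — and the general-parity statement can be handled in the appendix via the crude bounds $(m-1)!!\lesssim m!!\lesssim (m+1)!!$.
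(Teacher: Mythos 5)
Your proposal is correct and follows essentially the same route as the paper's appendix proof: both substitute the formula from Claim~\ref{claim:leadpwterm}, use $(2j)!!=2^j\,j!$ to show that the product $\binom{n}{\frac12(n-k+2w)}\cdot(n-k+2w)!!\cdot(n+k-2w)!!$ is independent of $w$ (equal to $2^n\,n!$), and conclude that the only remaining $w$-dependence is $\binom{k}{w}$, maximized at $w=k/2$. Your extra step verifying the displayed form of $C_{k/2}$ via $\binom{n}{n/2}(n!!)^2=2^n n!$ is the same bookkeeping the paper leaves implicit, and is a welcome addition.
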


\section{Upper bound}
\label{s:ub}
\begin{lemma}
\label{lem:sneaky}
For any $w=0,...,k$ and pair of $k-1$ wise indistinguishable distributions, the function $Q_{w}$ reconstructs with advantage $\epsilon$, satisfying:
\[
\epsilon\lesssim \frac{(n-k)^{\frac{n-k}{2}}\cdot (n+k)^{\frac{n+k}{2}}}{2^k\cdot n^n} 
\]
\end{lemma}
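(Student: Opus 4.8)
\textbf{Proof proposal for Lemma~\ref{lem:sneaky}.}
The plan is to bound the reconstruction advantage of $Q_w(x|_S)$ by exhibiting a good enough low-degree approximation to the symmetrized polynomial $p_w$ over $\dout_n$, and then invoke the duality of Claim~\ref{c:dualityclaim} together with Corollary~\ref{cor:multitodout}. By Claim~\ref{c:dualityclaim}, if no pair of $(k-1)$-wise indistinguishable distributions can be reconstructed by $Q_w(x|_S)$ with advantage $\epsilon$, it suffices to show $\adeg_{\epsilon/2}(Q_w(x|_S)) \leq k-1$, which by Corollary~\ref{cor:multitodout} (its contrapositive) follows from exhibiting a degree $\leq k-1$ polynomial $q$ with $\max_{t\in\dout_n}|p_w(t)-q(t)| \lesssim \epsilon/2$. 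So the whole lemma reduces to: \emph{how well can $p_w$ be approximated over $\dout_n$ by degree $\leq k-1$ polynomials?}

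By Equation~\ref{eq:pw}, $p_w(t) = C_w \prod_{z\in Z_w}(t-z)$ is a degree-$k$ polynomial with leading monomial coefficient $C_w$ (computed in Claim~\ref{claim:leadpwterm}). The key observation is that the degree-$(k-1)$ approximation error of $p_w$ over $\dout_n$ is controlled by $|C_w|$ times the degree-$(k-1)$ approximation error of the \emph{monic} polynomial $\prod_{z\in Z_w}(t-z)$, and the latter is at most the degree-$(k-1)$ approximation error of the monomial $t^k$ itself — up to the usual polynomial-in-$n$ slack — since subtracting off lower-degree terms is free. More precisely, write $\prod_{z\in Z_w}(t-z) = t^k + (\text{lower order})$; an approximation $r$ to $t^k$ of degree $\leq k-1$ with error $\eta$ over $\dout_n$ yields the degree $\leq k-1$ approximation $r + (\text{lower order})$ to $\prod_{z\in Z_w}(t-z)$ with the same error $\eta$. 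Hence, by Corollary~\ref{cor:maineasy},
\[
\max_{t\in\dout_n}|p_w(t)-q(t)| \;\lesssim\; |C_w|\cdot (2(n+1))^{-k}\sqrt{\tfrac{(n+k+1)!}{(n-k+1)!}} \;\lesssim\; |C_w|\cdot (2n)^{-k}\sqrt{\tfrac{(n+k)!}{(n-k)!}},
\]
absorbing the $(n+1)$-versus-$n$ and factorial-shift discrepancies into the $\lesssim$.

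It then remains to plug in the bound on $|C_w|$. Since $C_w$ is maximized at $w=k/2$ by Lemma~\ref{lem:argmax}, it suffices to bound $C_{k/2} \cdot (2n)^{-k}\sqrt{(n+k)!/(n-k)!}$ by the target expression. Using Claim~\ref{claim:leadpwterm} (or the closed form for $C_{k/2}$ in Lemma~\ref{lem:argmax}), $n!!^2 = (2^{n/2}(n/2)!)^2 = 2^n ((n/2)!)^2$ (up to $\lesssim$ factors for parity of $n$, by Equation~\ref{eq:df}), and the binomial-coefficient estimates in Equations~\ref{eq:10} and~\ref{eq:11}, everything collapses to a Stirling computation: the central binomials and double factorials contribute $\Theta(\mathrm{poly}(n))$ factors together with powers of $2$ and $e$, and the $\sqrt{(n+k)!/(n-k)!}$ term combined with the $n^k$ and $n!!^2$ pieces produces $(n-k)^{(n-k)/2}(n+k)^{(n+k)/2}/(2^k n^n)$ after the $e$-powers cancel. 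I expect the main obstacle to be this last bookkeeping step: carefully tracking all the $\sqrt{\cdot}$, double-factorial, and central-binomial factors through Stirling's approximation to confirm that the $e$'s and stray $2$'s cancel exactly and that the net polynomial-in-$n$ overhead is indeed only polynomial (so that it can be hidden by $O(n^c)$ in Theorem~\ref{thm:mainupper}), rather than anything subexponential. The conceptual steps — reduce to monomial approximation, apply Corollary~\ref{cor:maineasy}, reduce $w$ to $k/2$ — are each short; the calculation is where care is needed.
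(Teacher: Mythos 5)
Your overall plan mirrors the paper's proof: write $p_w(t) = C_w t^k + (\text{lower order})$, approximate the leading monomial over $\dout_n$ via Corollary~\ref{cor:maineasy} at cost $|C_w|$ times the monomial error, bound $C_w$ by $C_{k/2}$ via Lemma~\ref{lem:argmax}, and finish with a Stirling computation using Equations~\ref{eq:df}, \ref{eq:10} and~\ref{eq:11}. Those parts are correct (the deferred bookkeeping is the same routine calculation the paper carries out), and the $(n+1)$-versus-$n$ slack you absorb into $\lesssim$ is harmless.

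The genuine gap is the bridge from the univariate approximation to the advantage bound. You assert that a degree-$(k-1)$ approximation of $p_w$ over $\dout_n$ yields $\adeg_{\eps/2}(Q_w(x|_S)) \leq k-1$ ``by the contrapositive of Corollary~\ref{cor:multitodout}.'' The contrapositive of that corollary is (low approximate degree) $\Rightarrow$ (good univariate approximation exists); you are using its converse, which is false: $p_w(t)$ is the \emph{average} of $Q_w$ over strings of a fixed Hamming weight, so approximating $p_w$ on $\dout_n$ says nothing about approximating the multivariate junta $Q_w(x|_S)$ pointwise on $\{0,1\}^n$. Indeed, no bound of the form $\adeg_{\eps/2}(Q_w(x|_S)) \leq k-2$ (which is what Claim~\ref{c:dualityclaim} would need in order to rule out $(k-1)$-wise indistinguishable pairs with advantage $\eps$ --- note also your off-by-one) can hold for the lemma's exponentially small $\eps$: take $\mu$ uniform on strings whose restriction to $S$ has even weight and $\nu$ the odd-weight analogue, padded with uniform bits outside $S$. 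These are $(k-1)$-wise indistinguishable over $\{0,1\}^n$ (though not symmetric), and $Q_{k/2}$ distinguishes them with advantage $\binom{k}{k/2}2^{-(k-1)} = \Theta(1/\sqrt{k})$, which for, say, $k = n/2$ vastly exceeds the lemma's bound; by Claim~\ref{c:dualityclaim} itself this forces the approximate degree to be large, so your intermediate claim is unprovable. The correct bridge --- implicit in the paper's Lemma~\ref{lem:sneaky} and explicit in the proof of Theorem~\ref{thm:mainupper} --- never passes through multivariate approximate degree: since $\mu,\nu$ are \emph{symmetric}, Fact~\ref{fact:pw} gives $\E[Q_w(\mu|_S)]-\E[Q_w(\nu|_S)] = \E[p_w(\phi(|\mu|))]-\E[p_w(\phi(|\nu|))]$, and subtracting your degree-$(k-1)$ univariate approximation $q$ changes neither expectation, because $q(\phi(\sum_i x_i))$ multilinearizes to a polynomial of degree at most $k-1$ in the bits, whose expectations agree under $(k-1)$-wise indistinguishable distributions; what remains is at most twice the approximation error. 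The symmetry of the distributions is essential (my parity example shows the statement fails without it), and your route never uses it --- that is exactly why it cannot work as written, while the rest of your argument goes through once this step is repaired.
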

\begin{proof}
By Corollary~\ref{cor:maineasy}, there exists a degree $k-1$ polynomial approximation to $p_w$ over $\dout_n$ with error
\begin{equation}
\label{e:compare2}
\lesssim C_w\cdot\frac{1}{(2(n+1))^k}\cdot\sqrt{\frac{(n+k+1)!}{(n-k+1)!}}
\end{equation}
By Claim~\ref{claim:leadpwterm} and Lemma~\ref{lem:argmax} this is upper bounded by (up to $\text{poly}(n)$ factors):
\[
\frac{\binom{k}{k/2}\binom{n-k}{(n-k)/2}}{\binom{n}{n/2}}\cdot \frac{n^{k}\cdot (n-k)!!^2}{n!!^2}\cdot\frac{1}{(2(n+1))^k}\cdot\sqrt{\frac{(n+k+1)!}{(n-k+1)!}}
\]

\begin{align*}
\lesssim &  \frac{k!\cdot (n-k)!\cdot n^k}{(k/2)!^2\cdot n!\cdot 2^k}\cdot \frac{1}{(2n+2)^k}\cdot  \sqrt{\frac{(n+k+1)!}{(n-k+1)!}}\\
\lesssim & \frac{(n-k)^{n-k}\cdot (2e)^k}{n^n}\cdot\frac{n^k}{2^k} \cdot \frac{1}{(2n+2)^k}\cdot  \frac{(n+k+1)^{(n+k+1)/2)}}{e^k\cdot (n-k+1)^{(n-k+1)/2}} \\
\lesssim & \frac{(n-k)^{(n-k-1)/2}\cdot (n+k+1)^{(n+k+1)/2}}{2^k\cdot n^n} \\
\lesssim & \frac{(n-k)^{(n-k)/2}\cdot (n+k)^{(n+k)/2}}{2^k\cdot n^n},
\end{align*}
where we have used Equations~\ref{eq:11},~\ref{eq:10}, and~\ref{eq:df} to bound the central binomial coefficient, the factorial, and the double factorial, respectively.
\paragraph{Comparison to uniform approach.} We saw in Section~\ref{s:moneasy} that any \textit{uniform} approximation to the monomial would have error $2^{-k+1}$. Substituting that bound into Equation~\ref{e:compare2} and carrying out the same calculation would yield an upper bound of $\frac{e^k\cdot (n-k)^{n-k}}{2^k\cdot n^{n-k}}$, which for $k\approx n$ is $\gtrsim (e/2)^n$. Because any distinguishing advantage must be at most 1, this is a vacuous bound.

\end{proof}

\subsection{Proof of upper bound: Theorem~\ref{thm:mainupper}}
\begin{theorem}
For any pair of $k-1$ wise indistinguishable distributions $\mu,\nu$ over $\{0,1\}^n$, the statistical distance $\epsilon$ between $\mu|_k$ and $\nu|_k$ satisfies:
\[
\epsilon \lesssim \frac{ (n-k)^{\frac{n-k}{2}}\cdot (n+k)^{\frac{n+k}{2}}}{2^k\cdot n^n} 
\]
\end{theorem}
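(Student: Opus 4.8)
The plan is to reduce the statement about statistical distance between symmetric distributions to the approximation-theoretic content already developed, via symmetrization and LP duality. First I would invoke Fact~\ref{fact:symmetricmarginals} and Fact~\ref{fact:symmetrictest}: since $\mu$ and $\nu$ are symmetric, so are their $k$-wise marginals $\mu|_k$ and $\nu|_k$, and the optimal distinguisher $T:\{0,1\}^k\to[0,1]$ may be taken to be symmetric. The statistical distance is $\max_T\{\E_{\mu|_k}[T]-\E_{\nu|_k}[T]\}$ over symmetric tests, and any symmetric $T$ on $k$ bits is a nonnegative combination (in fact a $\{0,1\}$-combination, but a convex-type decomposition suffices) of the indicator functions $Q_0,\dots,Q_k$ of exact Hamming weight. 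Equivalently, viewing $T$ as acting on the first $k$ coordinates of an $n$-bit string, $T = \sum_{w=0}^k a_w Q_w$ with each $Q_w$ as in the preliminaries.

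The key step is then a triangle-inequality / linearity argument: the reconstruction advantage of $T=\sum_w a_w Q_w$ against $k-1$-wise indistinguishable $\mu,\nu$ is at most $\sum_w |a_w|$ times the maximum advantage achievable by a single $Q_w$. Since $T$ is $\{0,1\}$-valued and symmetric on $k$ bits, the weights $a_w$ when $T$ is written as an indicator of a subset of weights are each $0$ or $1$, so $\sum_w|a_w|\le k+1$, which is absorbed into the $\lesssim$. Hence it suffices to bound the advantage of each individual $Q_w$, and this is exactly the content of Lemma~\ref{lem:sneaky}, which gives advantage $\lesssim (n-k)^{(n-k)/2}(n+k)^{(n+k)/2}/(2^k n^n)$ uniformly in $w$. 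Combining, $\epsilon \lesssim (k+1)\cdot (n-k)^{(n-k)/2}(n+k)^{(n+k)/2}/(2^k n^n) \lesssim (n-k)^{(n-k)/2}(n+k)^{(n+k)/2}/(2^k n^n)$, which is the claimed bound (the polynomial factor $O(n^c)$ in Theorem~\ref{thm:mainupper} accounts for all such hidden factors).

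The main obstacle — though it has essentially been handled already in the lead-up — is making the decomposition-of-a-symmetric-test step rigorous while keeping the loss only polynomial in $n$: one must be careful that expressing $T$ over the $Q_w$ basis does not introduce large coefficients, and that Lemma~\ref{lem:sneaky}'s bound genuinely holds for \emph{every} $w$ (which is why Lemma~\ref{lem:argmax} was needed to identify $w=k/2$ as the worst case for $C_w$). A secondary point to check is that Lemma~\ref{lem:sneaky} is stated for a fixed index set $S$ of size $k$, but by symmetry (Fact~\ref{fact:symmetrictest} and the remark that the choice of indices is irrelevant for symmetric distributions) the choice $S=\{1,\dots,k\}$ is without loss of generality, so no additional union bound over $\binom{n}{k}$ sets is incurred. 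Once these are in place the proof is just the chain of inequalities above, with all arithmetic already carried out inside the proof of Lemma~\ref{lem:sneaky}.
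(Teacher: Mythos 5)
Your proposal is correct and matches the paper's own proof essentially step for step: reduce to a symmetric $\{0,1\}$-valued test via Facts~\ref{fact:symmetricmarginals} and~\ref{fact:symmetrictest}, decompose it as $\sum_w b_w Q_w$ with $b_w\in\{0,1\}$, apply the triangle inequality to lose only a factor $k+1$, and invoke Lemma~\ref{lem:sneaky} (whose worst case over $w$ is handled by Lemma~\ref{lem:argmax}) for each $Q_w$. Nothing further is needed.
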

\begin{proof}
Let $T$ be a general distinguisher on $k$ inputs.  By Facts~\ref{fact:symmetricmarginals} and~\ref{fact:symmetrictest}, $T$ can be assumed to be a symmetric Boolean-valued function and has the representation $T=\sum_{w=0}^k b_w\cdot Q_w$ where each of the $b_w$ is either 0 or 1. We bound the distinguishing advantage as follows. Recalling that $\mu$ and $\nu$ are $k-1$-indistinguishable symmetric distributions over $\{0, 1\}^n$, for any set $S \subseteq [n]$ of size $k$ we have:

\begin{align*}
\E[T(\mu|_S)] - \E[T(\nu|_S)]  
 &= \sum_{w = 0}^k b_w \bigl(\E[Q_w(\mu|_S)] - \E[Q_w(\nu|_S)]\bigr)  \\
 &\leq \sum_{w=0}^k \bigabs{\E[Q_w(\mu|_S)] - \E[Q_w(\nu|_S)]}   \\
   &\leq (k+1) \cdot\max_{w=0,...,k} \bigabs{\E[p_{w}(\phi(\abs{\mu})] - \E[p_{w}(\phi(\abs{\nu}))]}  \\
   &\lesssim \frac{(n-k)^{\frac{n-k}{2}}\cdot (n+k)^{\frac{n+k}{2}}}{2^k\cdot n^n},
\end{align*}
where the final upper bound is from Lemma~\ref{lem:sneaky}.
\end{proof}

\section{Proof of lower bound: Theorem~\ref{thm:mainlower}}
\label{s:lb}
\begin{theorem}
Let $Q_{k/2}$ be the statistical test over $k$ bits that accepts if and only if the observed Hamming weight is $k/2$. There exists a pair of $k-1$-wise indistinguishable distributions $X,Y$ such that the reconstruction advantage of $Q_{k/2}$ is at least $\epsilon$, satisfying:
\[
\epsilon \gtrsim \frac{(n-k)^{\frac{n-k}{2}}\cdot (n+k)^{\frac{n+k}{2}}}{2^k\cdot n^n}.
\] 
\end{theorem}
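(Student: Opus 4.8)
The plan is to mirror the upper bound argument but run the LP duality in the other direction, using the hardness of approximation result (Corollary~\ref{cor:mainhard}) rather than the easiness result (Corollary~\ref{cor:maineasy}). Concretely, I would first argue that $Q_{k/2}$, viewed as a Boolean function on $\{0,1\}^n$ that depends only on the first $k$ coordinates, has $\eps/2$-approximate degree at least $k-1$ for a suitable $\eps$; then invoke Claim~\ref{c:dualityclaim} to conclude the existence of a pair of $(k-1)$-wise indistinguishable distributions distinguished by $Q_{k/2}$ with advantage $\geq\eps$. Since the distributions produced by Claim~\ref{c:dualityclaim} need not be symmetric, I would either appeal to a symmetrization of distributions (averaging over coordinate permutations preserves both indistinguishability and the distinguishing advantage because $Q_{k/2}$ is already symmetric in the first $k$ coordinates and the marginal only sees Hamming weight) or note that the dual witness can be taken symmetric since the primal approximation problem has a symmetric structure.

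The core of the argument is the approximate-degree lower bound. By Fact~\ref{fact:pw}, approximating $Q_{k/2}(x|_S)$ to error $\eps/2$ by a degree-$(k-1)$ polynomial on $\{0,1\}^n$ is equivalent (via Minsky--Papert symmetrization) to approximating the univariate polynomial $p_{k/2}$ to error $\eps/2$ over $\dout_n$ by a degree-$(k-1)$ polynomial. By Equation~\ref{eq:pw}, $p_{k/2}(t) = C_{k/2}\cdot\prod_{z\in Z_{k/2}}(t-z)$ is a degree-$k$ polynomial whose leading coefficient in the monomial basis is exactly $C_{k/2}$. Now the key point: the leading monomial coefficient equals $C_{k/2}$ times the leading Gram coefficient of $\prod_{z\in Z_{k/2}}(t-z)$ only up to the change of basis, so instead I would apply Lemma~\ref{lem:gramorth} directly — I need the \emph{Gram} leading coefficient of $p_{k/2}$, not the monomial one. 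The cleanest route is: the Gram leading coefficient of $p_{k/2}$ equals $C_{k/2}$ divided by the Gram leading coefficient of $x^k$ expressed appropriately — more precisely, if $x^k = C\phi_k + (\text{lower})$ with $C$ as in Lemma~\ref{lem:monomialgram}, then a degree-$k$ polynomial with monomial leading coefficient $C_{k/2}$ has Gram leading coefficient $C_{k/2}/C$... wait, it is $C_{k/2}\cdot C$ since $\phi_k$ has leading monomial coefficient $1/C$. So the Gram leading coefficient of $p_{k/2}$ is $C_{k/2}\cdot C$ where $C \gtrsim (2n)^{-k}\sqrt{(n+k)!/(n-k)!}$ by Lemma~\ref{lem:monomialgram}. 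By Lemma~\ref{lem:gramorth} this forces approximation error at least $C_{k/2}\cdot C \gtrsim C_{k/2}\cdot(2n)^{-k}\sqrt{(n+k)!/(n-k)!}$ over $\din_n$, and by Lemma~\ref{lem:doutdin} the same bound holds over $\dout_n$.

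With the approximation error lower bound $\eps/2 \gtrsim C_{k/2}\cdot(2n)^{-k}\sqrt{(n+k)!/(n-k)!}$ in hand, I would plug in the value of $C_{k/2}$ from Lemma~\ref{lem:argmax} and repeat essentially the same asymptotic estimate carried out in the proof of Lemma~\ref{lem:sneaky} — using Equations~\ref{eq:11},~\ref{eq:10}, and~\ref{eq:df} to handle the central binomial coefficients, factorials, and double factorials — to arrive at
\[
\eps \gtrsim \frac{(n-k)^{\frac{n-k}{2}}\cdot (n+k)^{\frac{n+k}{2}}}{2^k\cdot n^n}.
\]
Because the easiness bound (Corollary~\ref{cor:maineasy}) and the hardness bound (Corollary~\ref{cor:mainhard}) differ only by a constant factor of $2$ on the leading Gram coefficient and by lower-order shifts $n \to n+1$, $k \to k$ inside the factorials, this calculation is the mirror image of the one in Section~\ref{s:ub} and produces a matching bound up to the $\text{poly}(n)$ factors hidden by $\lesssim$ and $\gtrsim$. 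The main obstacle I anticipate is bookkeeping: making sure the translation from the \emph{monomial} leading coefficient $C_{k/2}$ (which is what Claim~\ref{claim:leadpwterm} computes) to the \emph{Gram} leading coefficient (which is what Lemma~\ref{lem:gramorth} consumes) is done correctly, including the parameter of the Gram basis ($n$ for $\dout_n$, with the stretch/shift relations of Equations~\ref{eq:stretch}--\ref{eq:shift} matching those already used in Corollaries~\ref{cor:maineasy} and~\ref{cor:mainhard}), and a secondary obstacle is confirming that the symmetric dual distributions can be extracted, which should follow because $Q_{k/2}$ and the whole symmetrized approximation problem are invariant under $S_n$.
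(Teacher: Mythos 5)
Your proposal follows essentially the same route as the paper's own proof: reduce via Claim~\ref{c:dualityclaim}, Corollary~\ref{cor:multitodout}, and Lemma~\ref{lem:doutdin} to hardness of approximating $p_{k/2}$ over $\din_n$, identify the degree-$k$ Gram coefficient of $p_{k/2}$ as $C_{k/2}$ times the leading Gram coefficient of $x^k$ from Lemma~\ref{lem:monomialgram} (your ``multiply, not divide'' correction is right), apply Lemma~\ref{lem:gramorth}, and finish with the same asymptotic estimates. Your extra care about symmetrizing the dual distributions goes slightly beyond what the paper writes down, but the argument is otherwise the same and correct.
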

\begin{proof}
By Claim~\ref{c:dualityclaim} and Corollary~\ref{cor:multitodout}, it suffices to show that any degree $k-1$ polynomial approximation to $p_{k/2}$ over $\dout_n$ must have error at least $\epsilon$. Lemma~\ref{lem:doutdin} reduces the problem further to proving hardness of approximation of $p_{k/2}$ over $\din_n$. From Claim~\ref{claim:leadpwterm} and Lemma~\ref{lem:monomialgram}, the coefficient on the degree $k$ term of the Gram representation of $p_{k/2}$ is
\[
\gtrsim \frac{\binom{k}{k/2}\binom{n-k}{(n-k)/2}}{\binom{n}{n/2}}\cdot \frac{n^{k}\cdot (n-k)!!^2}{n!!^2}\cdot \frac{1}{(2n)^k}\cdot\sqrt{\frac{(n+k)!}{(n-k)!}},
\]
which is $\gtrsim \epsilon$, by trivially applying the bounds for the central binomial coefficient, factorial, and double factorial in Section~\ref{s:prelim}. The theorem then follows by applying Lemma~\ref{lem:gramorth}.
\end{proof}

\newpage
\bibliographystyle{abbrv}
\bibliography{OmniBib}

\appendix
\section{A technical claim}
\begin{claim}
\label{c:technical}
Let $v=\prod_{i=1}^k\frac{(i-1/2)(i+1/2)}{i^2}$. Then, 
\[
\frac{1}{k^2} \leq v \leq 1
\]
\end{claim}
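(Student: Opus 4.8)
The plan is to treat the two inequalities separately, in both cases working with the rewritten factor $\frac{(i-1/2)(i+1/2)}{i^2} = \frac{(2i-1)(2i+1)}{(2i)^2} = 1 - \frac{1}{4i^2}$.

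\emph{Upper bound.} Each factor equals $1 - \frac{1}{4i^2}$, which lies strictly in $(0,1)$ for every $i \ge 1$; hence $v$ is a product of numbers in $(0,1)$ and in particular $v < 1$. (Since $v$ is moreover decreasing in $k$, one in fact has $v \le v_1 = 3/4$, so there is ample slack.)

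\emph{Lower bound.} I would pass to the reciprocal $\frac1v = \prod_{i=1}^{k}\frac{4i^2}{4i^2-1}$ and bound its logarithm. Using $\log(1+x)\le x$ together with the partial-fraction identity $\frac{1}{4i^2-1} = \frac12\bigl(\frac{1}{2i-1} - \frac{1}{2i+1}\bigr)$, the sum $\sum_{i=1}^k\log\frac{4i^2}{4i^2-1}$ telescopes and is at most $\frac12\bigl(1-\frac1{2k+1}\bigr) < \frac12$, so $\frac1v < e^{1/2} < 2$, i.e.\ $v > \frac12$. Since $\frac12 \ge \frac1{k^2}$ for every $k \ge 2$, the claimed lower bound follows (the boundary value $k=1$, where $v = 3/4$, lies outside the regime in which the claim is used). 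Equivalent alternatives: recognize $\frac1v$ as the $k$-th partial product of Wallis's product $\prod_{i\ge1}\frac{(2i)^2}{(2i-1)(2i+1)} = \frac{\pi}{2}$, whose factors all exceed $1$, so $\frac1v < \frac{\pi}{2} < 2$; or use $\prod_{i=1}^k(2i+1) = (2k+1)\prod_{i=1}^k(2i-1)$ to write $v = (2k+1)\bigl(4^{-k}\binom{2k}{k}\bigr)^2$ and invoke the standard estimate $4^{-k}\binom{2k}{k}\ge \frac{1}{2\sqrt k}$, which gives $v \ge \frac{2k+1}{4k} > \frac12$.

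\emph{Main obstacle.} There is no substantial obstacle: the statement is a loose two-sided estimate, and the product telescopes cleanly in logarithmic (or Wallis) form. The only points requiring care are the partial-fraction identity and the trivial check that $1/k^2 \le 1/2$ once $k \ge 2$; the write-up should simply note that the relevant regime has $k$ large, so the exact constant in the lower bound is immaterial.
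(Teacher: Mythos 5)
Your proof is correct, and for the lower bound it takes a genuinely different (and stronger) route than the paper. The upper bound argument is identical: both you and the paper note each factor is $1-\frac{1}{4i^2}\in(0,1)$. For the lower bound, the paper compares $1/v$ with the telescoping product $\prod_j\bigl(1-\frac{1}{j^2}\bigr)$ over consecutive integers and settles for the crude bound $1/v\le k^2$; it never proves a constant bound. You instead show $v>\frac12$ for every $k$ (via $\log(1+x)\le x$ and the telescoping of $\sum_i\frac{1}{4i^2-1}$, or equivalently the Wallis product, or the identity $v=(2k+1)\bigl(4^{-k}\binom{2k}{k}\bigr)^2$), and then note $\frac12\ge\frac1{k^2}$ for $k\ge2$. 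This buys a sharper, $k$-independent estimate and incidentally exposes that the claim as literally stated fails at $k=1$ (there $v=3/4<1=1/k^2$), an edge case the paper's own computation also glosses over—its first displayed ``equality'' for $1/v$ is really a termwise comparison with shifted squares and is vacuous for $k=1$. Since the lower bound of the claim is only invoked inside a $\lesssim$ hiding $\mathrm{poly}(n)$ factors (in Lemma~\ref{lem:monomialgram}), your constant $\frac12$ suffices everywhere the paper uses $\frac1{k^2}$, so your treatment of the $k=1$ boundary is acceptable. One small phrasing point: what telescopes is the upper bound $\sum_i\frac{1}{4i^2-1}$ obtained after applying $\log(1+x)\le x$, not the sum of logarithms itself; state that precisely in the final write-up.
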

\begin{proof}
We have that $v=\prod_{i=1}^k\left(1-\frac{1}{4i^2}\right)$, which is a product of numbers less than 1 and justifies the upper bound. For the lower bound, we have:
\begin{align*} 
\frac1v &=\frac{1}{(1-1/2^2)(1-1/3^2)\cdot ...\cdot (1-1/k^2)}\\
    &= \frac{2^2\cdot 3^2\cdot ... \cdot k^2}{(2^2-1)\cdot (3^2-1)\cdot ... \cdot (k^2-1)}\\
        &\leq \frac{2^2\cdot 3^2\cdot ... \cdot k^2}{1^2\cdot 2^2\cdot ... \cdot (k-1)^2}= k^2.
                \end{align*}
\end{proof}

\section{Proof of Lemma~\ref{lem:argmax}}
\begin{proof} We find the maximising value of $C_w$ by expanding the expression for $C_w$ and removing terms that do not depend on $w$:
\begin{align*} 
\arg\max_{w}C_w&= \arg\max_w \frac{\binom{k}{w}\binom{n-k}{\frac12 (n-k)}}{\binom{n}{\frac12 (n-k+2w)}}\cdot \frac{n^{k}\cdot (n-k)!!^2}{(n-k+2w)!!\cdot (n-2w+k)!!}\\
  &= \arg\max_w  \frac{\binom{k}{w}}{\binom{n}{\frac12 (n-k+2w)}}\cdot \frac{1}{(n-k+2w)!!\cdot (n-2w+k)!!}\\
    &= \arg\max_w \frac{k!\cdot (\frac{n-k+2w}{2})!\cdot (n-\frac{n-k+2w}{2})!}{w!\cdot (k-w)!\cdot n! \cdot (n-k+2w)!!\cdot (n-2w+k)!!}\\
        &= \arg\max_w  \frac{(\frac{n-k+2w}{2})!\cdot (\frac{n+k-2w}{2})!}{w!\cdot (k-w)!\cdot (n-k+2w)!!\cdot (n-2w+k)!!}\\
                &= \arg\max_w  \frac{(\frac{n-k+2w}{2})!\cdot (\frac{n+k-2w}{2})!}{w!\cdot (k-w)!\cdot 2^n\cdot (\frac{n-k+2w}{2})!\cdot (\frac{n-2w+k}{2})!}\\
                                &= \arg\max_w  \frac{1}{w!\cdot (k-w)!}=k/2
                \end{align*}
\end{proof}

\end{document}